\documentclass[11pt]{article}
\usepackage{amsmath,amsthm,amssymb}
\usepackage{authblk}
\usepackage{charter,eulervm}
\usepackage{graphicx}
\usepackage[a4paper, total={6in, 9.5in}]{geometry}

\newcommand{\tw}{{\rm tw}}
\newcommand{\bs}{{\rm bs}}
\theoremstyle{definition}
\newtheorem{theorem}{Theorem}
\newtheorem{lemma}{Lemma}
\newtheorem{corollary}{Corollary}
\newtheorem{definition}{Definition}
\title{A (probably) optimal algorithm for {\sc Bisection} on bounded-treewidth graphs}

\author[1]{Tesshu Hanaka}
\affil[1]{Chuo University}
\author[2]{Yasuaki Kobayashi}
\author[2]{Taiga Sone}
\affil[2]{Kyoto University}

\begin{document}
\maketitle
\begin{abstract}
The maximum/minimum bisection problems are, given an edge-weighted graph, to find a bipartition of the vertex set into two sets whose sizes differ by at most one, such that the total weight of edges between the two sets is maximized/minimized.
Although these two problems are known to be NP-hard, there is an efficient algorithm for bounded-treewidth graphs.
In particular, Jansen et al. (SIAM J. Comput. 2005) gave an $O(2^tn^3)$-time algorithm when given a tree decomposition of width $t$ of the input graph, where $n$ is the number of vertices of the input graph.
Eiben et al. (ESA 2019) improved the dependency of $n$ in the running time by giving an $O(8^tt^5n^2\log n)$-time algorithm.
Moreover, they showed that there is no $O(n^{2-\varepsilon})$-time algorithm for trees under some reasonable complexity assumption.

In this paper, we show an $O(2^t(tn)^2)$-time algorithm for both problems, which is asymptotically tight to their conditional lower bound. 
We also show that the exponential dependency of the treewidth is asymptotically optimal under the Strong Exponential Time Hypothesis.
Finally, we discuss the (in)tractability of both problems with respect to special graph classes.
\end{abstract}


\section{Introduction}

Let $G = (V, E)$ be a graph and let $w: E \to \mathbb R$ be an edge-weight function.
For disjoint subsets $X, Y$ of $V$, we denote by $w(X, Y)$ the total weight of edges between $X$ and $Y$.
A {\em bisection} of $G$ is a bipartition of $V$ into two sets $A$ and $B$ such that $-1 \le |A| - |B| \le 1$.
The {\em size} of a bisection $(A, B)$ is defined as the number of edges between $A$ and $B$.
We also consider bisections of edge-weighted graphs.
In this case, the size of a bisection $(A, B)$ is defined as $w(A, B)$.

In this paper, we consider the following two problems: \textsc{Min Bisection} and \textsc{Max Bisection}.

\begin{definition}
    Given an edge-weighted graph $G = (V, E)$ with $w: E \to \mathbb R_+$, the problem \textsc{Min Bisection} (resp. \textsc{Max Bisection}) asks for a minimum size (resp. maximum size) bisection $(A, B)$ of $G$.
\end{definition}

These problems are well-known variants of {\sc MinCut} and {\sc MaxCut}, which feasible solutions are not required to be ``balanced''.
If every edge has non-negative weight, {\sc MinCut}, the problem of minimizing $w(A, B)$
over all bipartitions $(A, B)$ of $V$, can be solved in polynomial time.
For {\sc MaxCut}, the maximization version of {\sc MinCut}, the problem is NP-hard in general~\cite{Karp::1972} and trivially solvable in polynomial time for bipartite graphs.
Orlova and Dorfman~\cite{Orlova::1972} and Hadlock~\cite{Hadlock::1975} proved that {\sc MaxCut} can be solved in polynomial time for planar graphs with non-negative edge weights, and Shih et al.~\cite{Shih::1990} finally gave a polynomial-time algorithm for planar graphs with arbitrary edge weights.
In contrast to these complexity status of {\sc MinCut} and {\sc MaxCut}, the bisection problems are particularly hard.
{\sc Max Bisection} is known to be NP-hard even on planar graphs~\cite{Jansen::2005} and unit disk graphs~\cite{Diaz::2007}.
For {\sc Min Bisection}, it is NP-hard \cite{Garey::1976} even on $d$-regular graphs for fixed $d \ge 3$~\cite{Bui::1987} and unit disk graphs~\cite{Diaz::2017}. 
It is worth noting that the complexity of {\sc Min Bisection} on planar graphs is still open.

On bounded-treewidth graphs, {\sc Min Bisection} and {\sc Max Bisection} are solvable in polynomial time.
More precisely, given a graph $G$ of $n$ vertices and a tree decomposition of $G$ of width $t$, Jansen et al.~\cite{Jansen::2005} proved that {\sc Max Bisection} can be solved in time $O(2^{t}n^3)$.
This algorithm also works on graphs with arbitrary edge-weights, which means that {\sc Min Bisection} can be solved within the same running time.
Very recently, Eiben et al. \cite{Eiben::2019} improved the polynomial factor of $n$ by giving an $O(8^tt^5n^2\log n)$-time algorithm.
They also discussed a conditional lower bound on the running time: For any $\varepsilon > 0$, {\sc Min Bisection} cannot be solved in time $O(n^{2 - \varepsilon})$ on $n$-vertex trees unless {\sc $(\min, +)$-Convolution} (defined in Section~\ref{ssec:obtimal}) has an $O(n^{2 - \delta})$-time algorithm for some $\delta > 0$.
Since trees are precisely connected graphs of treewidth at most one, this lower bound also holds for bounded-treewidth graphs.
However, there is still a gap between the upper and (conditional) lower bound on the running time for bounded-treewidth graphs. 

In this paper, we fill this gap by showing an ``optimal'' algorithm for {\sc Min Bisection} and {\sc Max Bisection} on bounded-treewidth graphs.
The running time of our algorithm is $O(2^t(tn)^2)$, provided that a width-$t$ tree decomposition of the input graph is given as input. 
The polynomial factor in $n$ matches the conditional lower bound due to Eiben el al.~\cite{Eiben::2019}.
We also observe that {\sc Max Bisection} cannot be solved in time $(2 - \varepsilon)^{t}n^{O(1)}$ for any $\varepsilon > 0$ unless the Strong Exponential Time Hypothesis (SETH)~\cite{Impagliazzo::2001} fails.
These facts imply that the exponential dependency with respect to $t$ and the polynomial dependency with respect to $n$ in our running time are asymptotically optimal under these well-studied complexity-theoretic assumptions.

To overcome this ``quadratic hurdle'', we consider the vertex cover number of input graphs.
Since the treewidth of a graph is upper bounded by its vertex cover number,
this immediately implies an $O(2^k(kn)^2)$-time algorithm for {\sc Min Bisection} and {\sc Max Bisection}, where $k$ is the vertex cover number of the input graph.
We improve this running time to $O(2^kkn)$, that is, a linear time algorithm for graphs of bounded vertex cover number.

We also investigate the complexity of {\sc Min Bisection} and {\sc Max Bisection} from the viewpoint of special graph classes.
From the known hardness result of {\sc MaxCut}, we immediately have several complexity results for {\sc Max Bisection} and {\sc Min Bisection}.
The most notable case is that both problems are NP-hard even on unweighted bipartite graphs, on which {\sc MaxCut} can be trivially solved in polynomial time.
Apart from these complexity results, we show that {\sc Max Bisection} can be solved in linear time on line graphs.

\paragraph{Difference from the conference version}
Compared to the conference version~\cite{Hanaka:Optimal:2020}, the current paper additionally contains the conditional lower bound for paths based on {\sc $(\min, +)$-Convolution}, which slightly strengthens the conditional lower bound given by \cite{Eiben::2019} and a linear-time algorithm for graphs of bounded vertex cover number.
These are presented in Sections~\ref{ssec:obtimal}~and~\ref{sec:vc}.

\section{Preliminaries}\label{sec:prel}
Let $G = (V, E)$ be a graph, which is simple and undirected.
Throughout the paper, we use $n$ to denote the number of vertices of an input graph.
We also write $V(G)$ to denote the set of vertices of $G$.
For a vertex $v \in V$, we denote by $N(v)$ the set of neighbors of $v$ in $G$.
For two disjoint subsets $X, Y \subseteq V$, we denote by $E(X, Y)$ the set of edges having one end in $X$ and the other end in $Y$.
Let $w: E \to \mathbb R$ be an edge-weight function.
We write $w(X, Y)$ to denote the total weight of edges in $E(X, Y)$ (i.e., $w(X, Y) = \sum_{e \in E(X, Y)}w(e)$).
A bipartition $(A, B)$ of $V$ is called a {\em cut} of $G$.
The {\em size} of a cut is the number of edges between $A$ and $B$, that is, $|E(A, B)|$.
For edge-weighted graphs, the size is measured by the total weight of edges between $A$ and $B$.
A cut is called a {\em bisection} if $-1 \le |A| - |B| \le 1$.

In the next section, we work on dynamic programming based on {\em tree decompositions}.
A {\em tree decomposition} of $G$ is a pair of a rooted tree $T$ with vertex set $I$ and a collection $\{X_i : i \in I\}$ of subsets of $V$  such that
\begin{itemize}
    \item $\bigcup_{i \in I}X_i = V$;
    \item for each $\{u, v\} \in E$, there is an $i \in I$ with $\{u, v\} \subseteq X_i$;
    \item for each $v \in V$, the subgraph of $T$ induced by $\{i \in I : v \in X_i\}$ is connected.  
\end{itemize}
We refer to vertices of $T$ as {\em nodes} to distinguish them from vertices of $G$.
We say that $T$ is a {\em path decomposition} of $G$ if $T$ forms a path.
The {\em width} of $T$ is defined as $\max_{i \in I} |X_i| -1$.
The {\em treewidth} of $G$ is the minimum integer $k$ such that $G$ has a tree decomposition of width $k$ and the {\em pathwidth} of $G$ is the minimum integer $k$ such that $G$ has a path decomposition of width $k$. 

To facilitate dynamic programming on tree decompositions, several types of ``special'' tree decompositions are known.
Jansen et al.~\cite{Jansen::2005} used the well-known {\em nice tree decomposition} for solving {\sc Max Bisection}.
Eiben et al.~\cite{Eiben::2019} improved the dependency on $n$ by means of ``shallow'' tree decompositions due to Bodlaender and Hagerup~\cite{Bodlaender::1998}.
In this paper, we rather use nice tree decompositions as well as Jansen et al.~\cite{Jansen::2005}, and the algorithm itself is, in fact, identical with theirs.

We say that a tree decomposition $T$ is {\em nice} if for every non-leaf node $i$ of $T$,
either
\begin{itemize}
    \item {\bf Introduce node} $i$ has an exactly one child $j \in I$ such that $X_i = X_j \cup \{v\}$ for some $v \in V \setminus X_j$,
    \item {\bf Forget node} $i$ has an exactly one child $j \in I$ such that $X_j = X_i \cup \{v\}$ for some $v \in V \setminus X_i$, or
    \item {\bf Join node} $i$ has exactly two children $j, k \in I$ such that $X_i = X_j = X_k$.
\end{itemize}

\begin{lemma}[Lemma 13.1.3 in \cite{Kloks::1994}]
    Given a tree decomposition of $G$ of width $t$, there is an algorithm that converts it into a nice tree decomposition of width at most $t$ in time $O(t^2n)$.
    Moreover, the constructed nice tree decomposition has at most $4n$ nodes.
\end{lemma}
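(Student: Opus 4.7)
The plan is to apply the standard transformation from Kloks's textbook in three stages. First, I would preprocess the given tree decomposition $(T, \{X_i\}_{i \in I})$ by repeatedly contracting any tree edge $\{i,j\}$ with $X_i \subseteq X_j$ (keeping only the larger bag), and removing any leaf whose bag is contained in its neighbor's. After this reduction, every non-root node $i$ satisfies $X_i \setminus X_{\mathrm{parent}(i)} \neq \emptyset$; by the subtree property, each vertex of $G$ can serve as this ``fresh vertex'' for at most one node, so the reduced tree has at most $n+1$ nodes. The cleanup runs in $O(tn)$ time via a single bottom-up sweep.

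Second, I would root the reduced tree arbitrarily and binarize it: whenever an internal node $i$ has $d \ge 3$ children, I replace it by a chain of $d-1$ duplicates of its bag, each acting as a join node that peels off one child subtree. Since the reduced tree has at most $n$ leaves, this adds at most $n$ duplicate nodes and preserves the tree-decomposition axioms. Third, I would subdivide each remaining parent-child edge $(i,j)$ by first inserting forget nodes that remove the vertices of $X_i \setminus X_j$ one at a time, then inserting introduce nodes that add the vertices of $X_j \setminus X_i$ one at a time. The resulting decomposition is nice by construction and still has width at most $t$, since every new bag is a subset of $X_i \cup X_j$.

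To bound the total node count by $4n$, I would use the following accounting: the reduced tree contributes at most $n+1$ nodes, binarization adds at most $n$ join duplicates, and the subdivision step contributes at most $2n$ nodes via a charging argument that exploits the preprocessing. The running time is $O(t^2 n)$ because each bag operation (insertion, removal, set difference, subset test) takes $O(t)$ time, and the total number of such operations is $O(tn)$. The main obstacle is the charging argument that caps the total subdivision insertions at $2n$: one must show that, after the reduction, the sum $\sum_{(i,j)} |X_i \triangle X_j|$ over parent-child edges is $O(n)$, using the subtree property to charge each inserted introduce/forget node to a unique pair of (vertex, tree edge where the vertex enters or leaves its subtree), rather than relying on the naive $O(tn)$ bound obtained by multiplying edges by maximum bag size.
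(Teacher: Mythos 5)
The paper does not prove this lemma at all (it is quoted verbatim from Kloks's book), so the only question is whether your construction actually delivers the stated bounds, and there is a genuine gap at exactly the point you flag as "the main obstacle": the claim that after contracting subset-edges one has $\sum_{(i,j)}|X_i \triangle X_j| = O(n)$ is false. Your charging argument assigns each inserted node to a pair (vertex $v$, tree edge where $v$ enters or leaves the subtree $T_v$ of bags containing $v$), but while each vertex \emph{enters} $T_v$ along at most one edge (the edge above the root of $T_v$), it can \emph{leave} $T_v$ downward along many edges — one for every child of a $T_v$-node lying outside $T_v$. Concretely, take the graph with a clique on $w_1,\dots,w_{t+1}$ and vertices $x_1,\dots,x_m$ each adjacent to $w_1$, with the tree decomposition consisting of a root bag $R=\{w_1,\dots,w_{t+1}\}$ and $m$ child bags $C_j=\{w_1,x_j\}$. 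No bag is contained in a neighboring bag, so your preprocessing does nothing, yet $\sum_j |R \triangle C_j| = m(t+1)=\Theta(tn)$; subdividing every parent--child edge as you describe then produces $\Theta(tn)$ introduce nodes (each $w_i$, $i\ge 2$, gets re-introduced on every branch), so the $4n$ node bound fails. Note that only \emph{forget} events are unique per vertex in a rooted nice decomposition; introduce events are not.

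What your outline is missing is the normalization step that Kloks actually uses before making the decomposition nice: convert the decomposition into a \emph{smooth} one, in which every bag has exactly $t+1$ vertices and adjacent bags share exactly $t$ vertices (pad small bags with vertices from a neighboring bag, subdivide edges whose bags differ by more than one swap, and contract equal bags). A smooth decomposition has at most $n-t$ nodes, and then each original tree edge needs only one introduce and one forget node, so binarization plus subdivision yields at most roughly $4n$ nodes; this is where the constant $4$ comes from, and the padding/subdivision bookkeeping is also what gives the $O(t^2 n)$ running time. Your reduction-plus-subdivision scheme, by contrast, only guarantees $O(tn)$ nodes, which is weaker than the statement being proved (and, incidentally, would degrade the paper's later running-time analysis, which charges work per node of the nice decomposition).
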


\section{Bounded-treewidth graphs}\label{sec:btw}
Let $G = (V, E)$ be an edge-weighted graph with weight function $w: E \rightarrow \mathbb R$.
Note that we do not restrict the weight function to take non-negative values.
In this context, {\sc Min Bisection} is essentially equivalent to {\sc Max Bisection}.
Therefore, in this section, we will only consider the maximization counterpart.

\subsection{An $O(2^tn^3)$-time algorithm}
We quickly review the algorithm of Jansen et al.~\cite{Jansen::2005} for computing {\sc Max Bisection} on bounded-treewidth graphs.
Let $T$ be a nice tree decomposition of width at most $t$.
For each node $i \in I$, we use $V_i$ to denote the set of vertices of $G$ that is contained in $X_i$ or $X_j$ for some descendant $j \in I$ of $i$.

Let $i \in I$ be a node of $T$.
For each $S \subseteq X_i$ and $0 \le d \le |V_i|$,
we compute the value $\bs(i, S, d)$ which is the maximum size of a bisection $(A_i, B_i)$ of $G[V_i]$ such that $A_i \cap X_i = S$ and $|A_i| = d$.

\paragraph{\bf{Leaf node}}
Let $i \in I$ be a leaf of $T$. For each $S \subseteq X_i$, $\bs(i, S, d) = w(S, X_i \setminus S)$ if $d = |S|$.
Otherwise we set $\bs(i, S, d) = -\infty$.

\paragraph{\bf{Introduce node}}
Let $i \in I$ be an introduce node of $T$ and let $v \in X_i \setminus X_j$ be the vertex introduced at $i$, where $j \in I$ is the unique child of $i$.
Since the neighborhood of $v$ in $G[V_i]$ is entirely contained in $X_i$,
we can compute $\bs(i, S, d)$ as
\begin{equation*}
    \bs(i, S, d) = \begin{cases}
        \bs(j, S \setminus \{v\}, d - 1) + w(\{v\}, X_i \setminus S) & \text{if } v \in S\\
        \bs(j, S, d) + w(\{v\}, S) & \text{otherwise},
    \end{cases}
\end{equation*}
for each $S \subseteq X_i$ and $0 \le d \le |V_i|$.

\paragraph{\bf{Forget node}}
Let $i \in I$ be a forget node of $T$ and let $v \in X_j \setminus X_i$ be the vertex forgotten at $i$, where $j \in I$ is the unique child of $i$.
As $G[V_i] = G[V_j]$, we can compute $\bs(i, S, d)$ as
\begin{equation*}
    \bs(i, S, d) = \max(\bs(j, S, d), \bs(j, S \cup \{v\}, d))
\end{equation*}
for each $S \subseteq X_i$ and $0 \le d \le |V_i|$.

\paragraph{\bf{Join node}}
Let $i \in I$ be a join node of $T$ with children $j, k \in I$.
By the definition of nice tree decompositions, we have $X_i = X_j = X_k$.
For $S \subseteq X_i$ and $0 \le d \le |V_i|$,
\begin{equation}\label{rec:join}
    \bs(i, S, d) = \max_{|S| \le d' \le d}(\bs(j, S, d') + \bs(k, S, d-d'+|S|) - w(S, X_i \setminus S)).
\end{equation}
Note that the edges between $S$ and $X_i \setminus S$ contribute to both $\bs(j, S, d')$ and $\bs(k, S, d-d'+|S|)$.
Thus, we subtract $w(S, X_i \setminus S)$ in the recurrence (\ref{rec:join}).

\paragraph{\bf{Running time}} 
For each leaf, introduce, or forget node $i$, we can compute $\bs(i, S, d)$ in total time $O(2^tt|V_i|)$ for all $S\subseteq X_i$ and $0 \le d \le |V_i|$.
For join node $i$, the recurrence (\ref{rec:join}) can be evaluated in time $O(|V_i|)$ for each $S \subseteq X_i$ and $0 \le d \le |V_i|$, provided that $\bs(j, S, \ast)$, $\bs(k, S, \ast)$, and $w(S, X_i \setminus S)$ are stored in the table.
Therefore, the total running time for a join node $i$ is $O(2^t|V_i|^2)$.
Since $|V_i| = O(n)$ and $T$ has $O(n)$ nodes, the total running time of the entire algorithm is $O(2^tn^3)$.

\begin{theorem}[\cite{Jansen::2005}]\label{thm:cubic}
    Given a tree decomposition of $G$ of width $t$, {\sc Max Bisection} can be solved in $O(2^tn^3)$ time.
\end{theorem}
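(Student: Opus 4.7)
The plan is to build on the nice tree decomposition framework. First I would apply the cited lemma to convert the given width-$t$ tree decomposition into a nice tree decomposition with at most $4n$ nodes, paying $O(t^2 n)$ preprocessing. I root the nice tree decomposition and, for each node $i$ and each pair $(S,d)$ with $S \subseteq X_i$ and $0 \le d \le |V_i|$, define $\bs(i,S,d)$ to be the maximum of $w(A_i, B_i)$ over bipartitions $(A_i, B_i)$ of $G[V_i]$ with $A_i \cap X_i = S$ and $|A_i| = d$, using $-\infty$ when no such bipartition exists. The final answer is $\max_{S \subseteq X_r} \bs(r, S, \lceil n/2 \rceil)$, where $r$ is the root; by symmetry of the bisection condition $-1 \le |A|-|B| \le 1$, fixing $|A| = \lceil n/2 \rceil$ captures all valid bisections up to swapping sides.

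Second I would establish correctness of the recurrences by induction on $i$. At a leaf, the only admissible bipartition of $G[V_i] = G[X_i]$ with $A_i = S$ has $d = |S|$ and contributes exactly $w(S, X_i \setminus S)$. At an introduce node with new vertex $v$, the connectivity axiom of a tree decomposition guarantees that $v$ has no neighbor in $V_j$, so the only new edges to account for run from $v$ to $X_i$; depending on whether $v \in S$ this gives the two stated cases. At a forget node, $G[V_i] = G[V_j]$, and we simply maximize over whether the forgotten vertex lies in $A$ or $B$. At a join node, any bipartition of $G[V_i]$ decomposes uniquely into bipartitions of $G[V_j]$ and $G[V_k]$ that agree on $X_i$, with total weight $\bs(j,S,d') + \bs(k,S,d'') - w(S, X_i\setminus S)$, since edges within $X_i$ crossing the cut are counted in both child tables; the size constraint $|A_j| + |A_k| - |S| = d$ then forces $d'' = d - d' + |S|$, which is exactly the range in recurrence~(\ref{rec:join}).

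Finally I would charge the running time. For each leaf, introduce, or forget node, the $O(2^t |V_i|)$ table entries can each be computed in $O(t)$ time for evaluating $w(\{v\}, S)$ or $w(S, X_i\setminus S)$, giving $O(2^t t |V_i|)$ per such node. For a join node, the convolution over $d'$ costs $O(|V_i|)$ per $(S,d)$, so $O(2^t |V_i|^2)$ in total, which dominates. Summing $|V_i|^2 = O(n^2)$ over the $O(n)$ nodes yields $O(2^t n^3)$, absorbing the $O(t^2 n)$ preprocessing. The only delicate step is the join recurrence: one must simultaneously get the sign correction $-w(S, X_i\setminus S)$ right and the index shift $d - d' + |S|$ right, since errors here silently return wrong sizes rather than causing structural failure; every other ingredient is routine dynamic programming on a nice tree decomposition.
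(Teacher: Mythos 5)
Your proposal is correct and follows essentially the same route as the paper: convert to a nice tree decomposition, run the standard $\bs(i,S,d)$ dynamic program with the same leaf/introduce/forget/join recurrences (including the $-w(S, X_i\setminus S)$ correction and the $d-d'+|S|$ index shift), and charge $O(2^t|V_i|^2)$ per join node to get $O(2^t n^3)$. The only nitpick is at the introduce node: the correct statement is that $v$ has no neighbor in $V_j \setminus X_j$ (it may well have neighbors in $X_j$), but the conclusion you draw --- that all edges incident to $v$ in $G[V_i]$ go to $X_i$ --- is exactly right.
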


\subsection{A refined analysis for join nodes}
As we have seen in the previous subsection, the bottleneck of the algorithm of Theorem~\ref{thm:cubic} appears in computing join nodes.
For a refined running time analysis, we reconsider the recurrence (\ref{rec:join}) for join nodes.
This can be rewritten as
\begin{eqnarray*}
    \bs(i, S, d) &=& \max_{|S| \le d' \le d}(\bs(j, S, d') + \bs(k, S, d-d'+|S|) - w(S, X_i \setminus S))\\
    &=& \max_{\substack{d',d''\\ d' + d'' = d + |S|}} (\bs(j, S, d') + \bs(k, S, d'') - w(S, X_i \setminus S)).
\end{eqnarray*}
Since $d'$ and $d''$ respectively run over $0 \le d' \le |V_j|$ and $0 \le d'' \le |V_k|$, we can compute $\bs(i, S, d)$ in total time $O(2^t|V_j|\cdot|V_k|)$ for all $S$ and $d$.

For each node $i \in I$, we let $n_i = \sum_{j \preceq_T i}|X_j|$, where the summation is taken over all descendants $j$ of $i$ and $i$ itself.
Clearly, $n_i \ge |V_i|$ and hence the total running time of join nodes is upper bounded by
\begin{equation*}
    \sum_{i : \text{ join node}}O(2^tn_jn_k) = O\left(2^t \cdot\sum_{i: \text{ join node}}n_jn_k\right).
\end{equation*}
We abuse the notations $n_j$ and $n_k$ for different join nodes $i$, and the children nodes $j$ and $k$ are defined accordingly.
We claim that $\sum_{i: \text{ join node}}n_jn_k$ is $O((tn)^2)$.
To see this, let us consider the term $n_jn_k$ for a join node $i$.
For each node $q$ of $T$, we label all the vertices contained in $X_q$ by distinct labels $v_1^q, v_2^q, \ldots v_{|X_q|}^q$. 
Note that some vertex can receive two or more labels in this process since a vertex can be contained in more than one node in the tree decomposition. 
From now on, we regard such a vertex as distinct labeled vertices and hence $n_i$ corresponds to the number of labeled vertices that appear in the node $i$ or some descendant node of $i$.
Now, the term $n_jn_k$ can be seen as the number of pairs of labeled vertices $(\ell, r)$ such that $\ell$ is a labeled vertex contained in the subtree rooted at the left child $j$ and $r$ is a labeled vertex contained in the subtree rooted at the right child $k$.
A crucial observation is that any pair of labeled vertices $(\ell, r)$ is counted at most once at the lowest common ancestor of nodes containing $\ell$ and $r$.
This implies that $\sum_{i: \text{ join node}}n_jn_k$ is at most the number of distinct pairs of labeled vertices.
Since each node of $T$ contains at most $t + 1$ vertices and $T$ contains $O(n)$ nodes, we have
\[\sum_{i: \text{ join node}}n_jn_k = O((tn)^2).
\]
Therefore, the total running time of the algorithm is $O(2^t(tn)^2)$.

\begin{theorem}\label{thm:twdp}
    Given a tree decomposition of $G$ of width $t$, {\sc Max Bisection} can be solved in time $O(2^t(tn)^2)$.
\end{theorem}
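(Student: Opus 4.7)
The plan is to reuse the dynamic program behind Theorem~\ref{thm:cubic} verbatim and only sharpen the cost accounting at join nodes. First I would symmetrize the join recurrence (\ref{rec:join}) by substituting $d'' = d - d' + |S|$, obtaining
\[
\bs(i, S, d) = \max_{\substack{0 \le d' \le |V_j|,\ 0 \le d'' \le |V_k|\\ d' + d'' = d + |S|}}\bigl(\bs(j, S, d') + \bs(k, S, d'') - w(S, X_i \setminus S)\bigr).
\]
In this symmetric form, amortized over all target values $d$, the pair $(d', d'')$ ranges over at most $(|V_j|+1)(|V_k|+1)$ choices in total, so a single join node $i$ with children $j, k$ is charged only $O(2^t \cdot |V_j| \cdot |V_k|)$ work rather than the $O(2^t|V_i|^2)$ bound used in Theorem~\ref{thm:cubic}.

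Next I would pass from $|V_j|$ to the slightly larger \emph{labeled} count $n_j := \sum_{q \preceq_T j}|X_q|$ and aim to prove
\[
\sum_{i \text{ join}} n_j n_k = O((tn)^2).
\]
The idea is to give each occurrence of a vertex inside a bag its own distinct token, so that $n_j n_k$ is precisely the number of ordered pairs of labeled tokens $(\ell, r)$ with $\ell$ appearing in some bag in the subtree rooted at $j$ and $r$ appearing in some bag in the subtree rooted at $k$. Because each bag has at most $t+1$ vertices and the nice tree decomposition has $O(n)$ nodes, the total number of tokens is $N = O(tn)$, so a priori there are only $O((tn)^2)$ ordered pairs.

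The core of the argument is then a lowest-common-ancestor charging showing that each ordered pair of tokens is counted by \emph{at most one} join node across the sum. Concretely, if the LCA in $T$ of the bags containing $\ell$ and $r$ is a join node $i$ with children $j, k$, then the two tokens are separated into the subtrees of $j$ and $k$ and this is the unique join node with that separating property; at any strictly deeper join node both tokens lie in a common child-subtree (or outside the subtree), and at any strictly shallower join node they both fall into a single child-subtree. If the LCA is not a join node, then no join node separates them and the pair is not counted at all. Hence $\sum_{i \text{ join}} n_j n_k \le N^2 = O((tn)^2)$. Combined with the $O(2^t\cdot tn\cdot n)$ cost summed over all leaf/introduce/forget nodes (which is absorbed into $O(2^t(tn)^2)$) and the extra $2^t$ factor at each join node, this yields the claimed $O(2^t(tn)^2)$ total running time.

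The main subtle point I anticipate is making the LCA charging watertight: because a vertex of $G$ can appear in many bags, it is essential to treat each bag occurrence as a separate token and to verify carefully that exactly one join node (if any) witnesses the separation of such a pair into its left and right subtrees. Once this token-level view is in place the rest of the bookkeeping is straightforward.
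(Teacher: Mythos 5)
Your proposal is correct and follows essentially the same route as the paper: symmetrizing the join recurrence so a join node with children $j,k$ costs $O(2^t|V_j|\cdot|V_k|)$, then bounding $\sum_{i\text{ join}} n_j n_k = O((tn)^2)$ by treating each bag occurrence as a distinct labeled token and charging every token pair to the unique lowest common ancestor join node that separates it. The paper's proof is exactly this token-level LCA charging, so no further changes are needed.
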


\subsection{Optimality of our algorithm}\label{ssec:obtimal}
Eiben et al.~\cite{Eiben::2019} proved that if the following {\sc $(\min, +)$-Convolution} does not admit $O(n^{2-\delta})$-time algorithm for some $\delta > 0$,
there is no $O(n^{2-\varepsilon})$-time algorithm for {\sc Min Bisection} on (edge-weighted) trees for any $\varepsilon > 0$.

\begin{definition}[{\sc $(\min, +)$-Convolution}]
    Given two sequences of numbers $(a_i)_{1\le i \le n}$, $(b_i)_{1 \le i \le n}$, the goal of {\sc $(\min, +)$-Convolution} is to compute $c_i = \min_{1 \le j \le i} (a_{j} + b_{i-j+1})$ for all $1 \le i \le n$.
\end{definition}

In fact, they proved that this conditional lower bound holds even on trees with pathwidth two.
In the following, we slightly strengthen their hardness result by modifying their reduction.
\begin{theorem}\label{thm:path}
    Unless {\sc $(\min, +)$-Convolution} is solved in time $O(n^{2 - \delta})$ for some $\delta > 0$,
    there is no $O(n^{2 - \varepsilon})$-time algorithm for {\sc Min Bisection} on (edge-weighted) paths for any $\varepsilon > 0$.
\end{theorem}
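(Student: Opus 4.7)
The plan is to adapt Eiben et al.'s tree reduction by laying the two input sequences of the $(\min,+)$-Convolution instance along a single path separated by a heavy ``barrier'' edge, so that the bisection's balance constraint plays the role of their join-node balance. Given sequences $a=(a_1,\ldots,a_n)$ and $b=(b_1,\ldots,b_n)$, after a trivial shift that makes all entries lie in $[0,V]$ for some known $V$, I would build a weighted path $P$ on $2n+2$ vertices whose minimum bisection equals $c_n=\min_{j=1}^n(a_j+b_{n-j+1})$ plus a fixed, explicit additive constant. Since the construction takes $O(n)$ time and the offset is stripped off in $O(1)$ time, any $O(N^{2-\varepsilon})$-time algorithm for Min Bisection on paths would imply an $O(n^{2-\varepsilon})$-time algorithm for (the single-entry version of) $(\min,+)$-Convolution, contradicting the hypothesis.

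\textbf{Construction.} Pick a constant $C>5V$ and a huge weight $M\gg 2(C+V)$. The vertices of $P$ are $v_0,v_1,\ldots,v_{2n+1}$ with edges $e_i=\{v_{i-1},v_i\}$; the weights are $w(e_i)=a_i+C$ for $i\in\{1,\ldots,n\}$, $w(e_{n+1})=M$, and $w(e_{n+1+j})=b_{n+1-j}+C$ for $j\in\{1,\ldots,n\}$. The uniform additive shift $C$ forces all non-barrier weights into $[C-V,C+V]$, and the reversal of $b$ on the right half is what makes the balance equation $|A|=|B|=n+1$ translate into the index pairing $j\leftrightarrow n+1-j$ that defines $c_n$.

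\textbf{Analysis.} Any bisection of $P$ is determined by its set of ``transition'' edges (those whose two endpoints have different colors), and a $k$-transition bisection partitions $P$ into $k+1$ monochromatic contiguous blocks with alternating colors. For $k=2$ with block sizes $p,q,r$ satisfying $p+q+r=2n+2$, the balance equation $p+r=q$ forces $q=n+1$, so the two transitions lie at edge positions $i_1=p$ and $i_2=p+n+1$; when both avoid the barrier, $i_1\in\{1,\ldots,n\}$, and the cost equals $(a_{i_1}+C)+(b_{n+1-i_1}+C)$, whose minimum over $i_1$ is $c_n+2C$. A 1-transition bisection must cut the barrier and pays $M$; a 2-transition bisection that uses the barrier also pays at least $M$; and any $k$-transition bisection with $k\ge 3$ that avoids the barrier pays at least $3(C-V)$, which strictly exceeds $2(C+V)$ under the choice $C>5V$. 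Hence the minimum bisection equals $c_n+2C$ exactly, and $c_n$ is recovered in $O(1)$ time.

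\textbf{Main obstacle.} The delicate part is ruling out bisections with three or more transitions, since every individual non-barrier edge is relatively cheap; the uniform additive shift $C$ is what rescues the argument, making each extra transition cost at least $C-V$ while any two-transition advantage is at most $2V$, so $C>5V$ tips the balance. The remaining details (preprocessing the input so that the sequences take values in $[0,V]$, making $C$ and $M$ explicit polynomials in $n$ and $V$, and verifying the 2-transition configurations that touch the barrier) are routine.
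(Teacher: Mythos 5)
Your construction is internally sound (with $M$ huge and $C>5V$ the minimum bisection of your path is indeed $c_n+2C$), but the reduction does not prove the theorem, because the problem you reduce from is not hard. A single Min Bisection query on your path recovers only the \emph{one} entry $c_n=\min_{j}(a_j+b_{n-j+1})$, and computing a single entry of a $(\min,+)$-convolution is trivially an $O(n)$-time task (scan over $j$). So an $O(N^{2-\varepsilon})$ algorithm for Min Bisection on paths would only give a fast algorithm for this easy ``single-entry version,'' which contradicts nothing. The hardness of {\sc $(\min,+)$-Convolution} lies in producing all $n$ outputs; if you instead tried to batch your reduction, you would need $n$ separate path instances, each of size $\Theta(n)$, and solving each in $O(n^{2-\varepsilon})$ time yields $O(n^{3-\varepsilon})$ total, which is worse than the trivial $O(n^2)$ convolution algorithm. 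This is the missing idea, not a routine detail.

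The paper avoids this trap by reducing from a \emph{decision} problem known to be subquadratic-equivalent to {\sc $(\min,+)$-Convolution} (Backurs et al.): given three sequences $(a_i)_{1\le i\le n}$, $(b_i)_{1\le i\le n}$, $(c_i)_{1\le i\le 2n}$, decide whether $a_i+b_j+c_{i+j}\le 0$ for some $i,j$. All three sequences are laid out as weighted subpaths $P_a,P_b,P_c$ (after a uniform additive shift $W$, much like your $C$), joined by four long heavy ``blocker'' paths whose lengths are chosen so that any bisection of size at most $3W$ must keep each blocker monochromatic, must cut exactly one edge in each of $P_a,P_b,P_c$, and --- crucially --- the balance condition $|A|=|B|$ forces the cut position $k$ in $P_c$ to equal $i+j$, encoding the index coupling that carries the quadratic hardness. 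Your barrier-plus-balance mechanism is the right kind of gadget, but to salvage the proof you would need a third sequence (or an equivalent device) so that the balance constraint enforces a relation among \emph{two} free indices rather than determining a single cut position.
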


\begin{proof}
    We perform a reduction from a variant of {\sc 3SUM} to {\sc Min Bisection} on paths.
    In this variant, we are given three sequences of numbers $(a_i)_{1 \le i \le n}$, $(b_i)_{1 \le i \le n}$, and $(c_i)_{1 \le i \le 2n}$.
    The goal is to compute two indices $i, j$ with $1 \le i, j \le n$ such that $a_i + b_j + c_{i+j} \le 0$ or report that no such pair of indices exists.
    This variant is equivalent to {\sc $(\min, +)$-Convolution} in the sense that one of these problems can be solved in truly subquadratic time, then so is the other one~\cite{Backurs:Better:2017}.
    
    Let $W = 4nM + 1$, where $M$ is the maximum absolute value among $(a_i)_{1 \le i \le n}$, $(b_i)_{1 \le i \le n}$, and $(c_i)_{1 \le i \le 2n}$.
    Let $P_a = (x_1, x_2, \ldots, x_{n+1})$ be a path of length $n$.
    The weight of edge $\{x_{i}, x_{i+1}\}$ is defined as $a_i + W$ for each $1 \le i \le n$.
    Paths $P_b = (y_1, \ldots, y_{n+1})$ and $P_c = (z_1, \ldots, z_{2n+1})$ are defined accordingly.
    The weight of edges $\{y_i, y_{i+1}\}$ and $\{z_i, z_{i+1}\}$ are defined as $b_i + W$ and $c_i + W$, respectively.
    By the definition of $W$, the weight of each edge is positive.
    Now, we construct the entire path $P$ by combining these three paths.
    Let $Q_1, Q_2, Q_3, Q_4$ be four paths of length $100n-1$, $55n-1$, $10n-1$, $55n-1$, respectively.
    Recall that the length of a path is defined by the number of edges.
    For each $1 \le i \le 4$, we let $s_i$ and $t_i$ be the end vertices of $Q_i$.
    Each edge of $Q_i$ has weight $3W + 1$ for all $1 \le i \le 4$.
    The entire path is obtained by (1) connecting $t_1$ to $x_1$ with an edge of weight $3W+1$, (2) identifying $s_2$ and $x_{n+1}$, (3) identifying $t_2$ and $y_{n+1}$, (4) connecting $s_3$ to $y_1$ with an edge of weight $3W + 1$, (4) identifying $t_3$ and $z_{2n+1}$, and (5) connecting $s_4$ to $z_1$ with an edge of weight $3W + 1$, which is illustrated in Figure~\ref{fig:path}.
    \begin{figure}
        \centering
        \includegraphics[width=\textwidth]{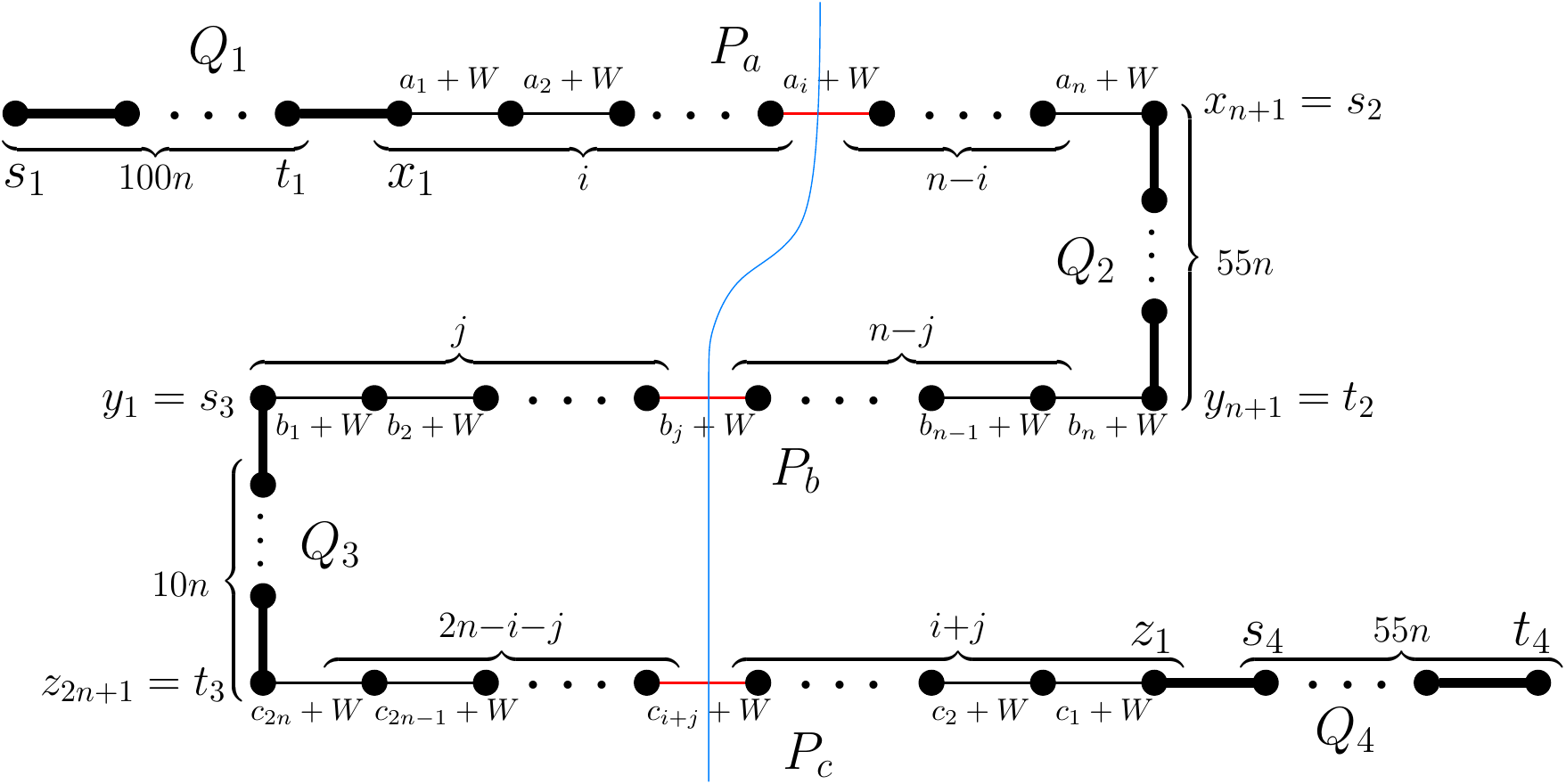}
        \caption{The figure illustrates the construction of the path $P$. Thick lines are edges in $Q_i$.}
        \label{fig:path}
    \end{figure}
    We prove that the given instance of {\sc 3SUM} is feasible if and only if there is a bisection of $P$ of size at most $3W$.
    
    Suppose that there are indices $i, j$ with $1 \le i, j \le n$ such that $a_i + b_j + c_{i+j} \le 0$.
    Then, we construct a cut $(A, B)$ of $P$ in such a way that three edges $\{x_{i}, x_{i+1}\}$, $\{y_{j}, y_{j+1}\}$, and $\{z_{i+j}, z_{i+j+1}\}$ are cut edges.
    More precisely, we let $A = V(Q_1) \cup V(Q_3) \cup \{x_k : 1 \le k \le i\} \cup \{y_k : 1 \le k \le j\} \cup \{z_k : i+j < k \le 2n\}$ and $B = V(P) \setminus A$.
    Since $|A| = |V(Q_1)| + |V(Q_3)| + i + j + 2n-i-j = 112n$ and $|B| = |V(Q_2)| + |V(Q_4)| + n-i+n-j+i+j = 112n$, $(A, B)$ is a bisection of $P$ of size at most $3W + a_i + b_i + c_{i+j} \le 3W$.
    
    Conversely, suppose there is a bisection $(A, B)$ of $P$ of size at most $3W$.
    We first observe that, for each $1 \le i \le 4$, either $V(Q_i) \subseteq A$ or $V(Q_i) \subseteq B$ as otherwise the size of $(A, B)$ exceeds $3W$.
    Without loss of generality, we assume that $V(Q_1) \subseteq A$.
    As $|A| = 112n$, we have $V(Q_2) \cup V(Q_4) \subseteq B$ and then $V(Q_3) \subseteq A$.
    This implies that at least one edge of each of $P_A$, $P_B$, and $P_C$ contributes to bisection $(A, B)$.
    Moreover, if at least four edges of these three paths contribute to the bisection $(A, B)$, the size of $(A, B)$ is more than $3W$.
    This follows from the fact that the sum of the absolute values in $(a_i)_{1\le i\le n}$, $(b_i)_{1 \le i \le n}$, and $c_{1 \le i \le 2n}$ is at most $4nM$, which is smaller than $W$.
    Thus, the bipartition $(A, B)$ separates each path into two parts: 
    \begin{align*}
        (V(P_a) \cap A, V(P_a) \cap B) &= (\{x_1, \ldots, x_i\}, \{x_{i+1}, \ldots, x_{n+1}\}),\\
        (V(P_b) \cap A, V(P_b) \cap B) &= (\{y_1, \ldots, y_j\}, \{y_{j+1}, \ldots, y_{n+1}\}),\\
        (V(P_c) \cap A, V(P_c) \cap B) &= (\{z_{k+1}, \ldots, z_{2n+1}\}, \{z_{1}, \ldots, z_{k}\}).
    \end{align*}
    As $|A| = |B|$, $k$ must be equal to $i + j$ and hence we have $a_i + b_i + c_{i+j} \le 0$.
\end{proof}

This conditional lower bound matches the dependency on $n$ in the running time of our algorithm.

In terms of the dependency on treewidth, we can prove that under the Strong Exponential Time Hypothesis (SETH)~\cite{Impagliazzo::2001}, there is no $(2-\varepsilon)^tn^{O(1)}$-time algorithm for {\sc Max Bisection}, and hence the exponential dependency on $t$ is asymptotically optimal. 
To see this, we use the following result.

\begin{theorem}[\cite{Lokshtanov::2018}]
    Unless SETH fails, there is no algorithm for {\sc Max Cut} on unweighted graphs that runs in time $2^{t - \varepsilon}n^{O(1)}$ for any $\varepsilon > 0$ even if a width-$t$ tree decomposition of the input graph is given as input for some $t$. 
\end{theorem}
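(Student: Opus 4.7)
The plan is to establish the SETH lower bound by constructing a treewidth-preserving reduction from CNF-SAT to \textsc{Max Cut}. Under SETH, CNF-SAT on $n$ variables and $m$ clauses cannot be solved in time $(2-\delta)^{n} \cdot \text{poly}(m)$ for any $\delta > 0$. Given a formula $\phi$, I would build an unweighted graph $G_\phi$ together with a tree decomposition of width $t = n + o(n)$, so that the maximum cut value of $G_\phi$ encodes the satisfiability of $\phi$. A hypothetical $2^{(1-\varepsilon)t} n^{O(1)}$-time algorithm for \textsc{Max Cut} would then solve CNF-SAT in time $2^{(1-\varepsilon)(n+o(n))} \cdot \text{poly}(n+m)$, which is $(2-\varepsilon')^{n} \cdot \text{poly}(n+m)$ for some $\varepsilon' > 0$, contradicting SETH.

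The construction uses two families of gadgets. For each variable $x_i$, a \emph{variable gadget} of constant size has exactly two maximum-cut configurations, encoding $x_i = \text{true}$ and $x_i = \text{false}$; it exposes two ``literal terminals'' whose side of the cut reflects the assignment. For each clause $C_j$, a \emph{clause gadget} attaches to the literal terminals of its variables in such a way that its contribution to the cut is larger by a fixed additive amount precisely when $C_j$ is satisfied. Since the target graph must be unweighted, any weighted intermediate gadget is simulated by multiple parallel paths of constant length, which preserves the treewidth up to a constant factor. A single global threshold $K$ then distinguishes satisfiable from unsatisfiable instances.

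The main obstacle is producing a tree (or path) decomposition whose width is \emph{exactly} $n + o(n)$, not $cn$ for some constant $c > 1$; otherwise the transferred exponent in the SAT algorithm is $(1-\varepsilon)cn$ rather than $(1-\varepsilon')n$, and the SETH contradiction evaporates. The intended design maintains a ``backbone'' bag containing one carefully chosen vertex per variable gadget, while the clause gadgets are traversed one at a time and contribute only $O(1)$ to the active bag at any moment. Verifying this tight width bound requires careful accounting and is the technical heart of the argument; given such a decomposition in hand, the remaining work is the bookkeeping of total cut values in the satisfiable and unsatisfiable cases of $\phi$.
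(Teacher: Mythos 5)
This theorem is not proved in the paper at all: it is quoted as a known result of Lokshtanov, Marx and Saurabh \cite{Lokshtanov::2018}, and the paper only uses it as a black box (the paper's own contribution here is merely the easy transfer from {\sc MaxCut} to {\sc Max Bisection} by adding $n$ isolated vertices). So what you are attempting is to reprove the cited lower bound from scratch, and as written your text is a plan rather than a proof. The overall strategy you describe (reduce from CNF-SAT, encode a variable's truth value by the side of the cut of a designated vertex, keep all $n$ such vertices in a backbone bag and sweep the clause gadgets one at a time, simulate weights by parallel two-edge paths) is indeed the right general shape and is in the spirit of the SETH lower bounds of \cite{Lokshtanov::2018}, but every load-bearing step is left unestablished: the variable and clause gadgets are never constructed, the ``exactly two maximum-cut configurations'' property is asserted only for the gadget in isolation and is not stable once clause gadgets are attached to the literal terminals (attachments can and will shift the gadget's optimum, so consistency of the encoded assignment across all clauses has to be enforced and proved), and the global threshold $K$ together with the additive reward for satisfied clauses is pure bookkeeping that you have not carried out. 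Two further issues are specific to {\sc Max Cut} and are invisible in your sketch: the problem is invariant under exchanging the two sides, so you need a grounding mechanism (a heavy reference structure, realized unweighted by many parallel paths) to make ``true'' and ``false'' well defined; and a clause gadget must reward ``at least one literal on the correct side'' by exactly a fixed amount without distorting the variable vertices, which is an OR-type gadget that requires genuine design, not just the statement that it exists.

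Finally, you yourself flag the decisive point and then defer it: the decomposition must have width $n+o(n)$, and your only quantitative remark about the gadget/weight simulation is that it ``preserves the treewidth up to a constant factor,'' which is exactly what the argument cannot afford --- a factor $c>1$ in the width destroys the SETH contradiction, as you note. Until the gadgets are specified and you exhibit a concrete path or tree decomposition in which the $n$ backbone vertices plus $O(1)$ (or $o(n)$) additional vertices per bag suffice, and until the satisfiable/unsatisfiable cut-value accounting is done, there is no proof here; there is a correct high-level intuition about how such a proof goes. If your goal is only to use the statement (as this paper does), cite \cite{Lokshtanov::2018}; if your goal is to prove it, the technical heart you postponed is essentially the entire content of that paper's {\sc Max Cut} section.
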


The known reduction (implicitly appeared in \cite{Bui::1987}) from {\sc MaxCut} to {\sc Max Bisection} works well for our purpose.
Specifically, let $G$ be an unweighted graph and let $n$ be the number of vertices of $G$.
We add $n$ isolated vertices to $G$ and the obtained graph is denoted by $G'$. 
It is easy to see that $G$ has a cut of size at least $k$ if and only if $G'$ has a bisection of size at least $k$.
Moreover, $\tw(G') = \tw(G)$. Therefore, the conditional lower bound is inherited from {\sc MaxCut}.

\begin{theorem}
    Unless SETH fails, there is no algorithm for {\sc Max Bisection} that runs in time $2^{t - \varepsilon}n^{O(1)}$ for any $\varepsilon > 0$ even if a width-$t$ tree decomposition of the input graph is given as input for some $t$. 
\end{theorem}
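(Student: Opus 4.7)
The plan is to deduce the theorem from the SETH-based $2^{t-\varepsilon}n^{O(1)}$ lower bound for \textsc{Max Cut} (the Lokshtanov et al.\ result stated just above) by exhibiting a treewidth-preserving reduction from \textsc{Max Cut} on unweighted graphs to \textsc{Max Bisection}. Given an $n$-vertex instance $G$ of \textsc{Max Cut} along with a width-$t$ tree decomposition $T$ of $G$, the reduction forms $G'$ as the disjoint union of $G$ with $n$ fresh isolated vertices, so that $|V(G')| = 2n$.

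First I would check that $\tw(G') = \tw(G)$: each new isolated vertex can be placed in its own singleton bag attached to an arbitrary node of $T$, producing a tree decomposition $T'$ of $G'$ of the same width $t$ in time $O(n)$. Next I would verify the value-preserving correspondence between solutions. For the forward direction, given any cut $(A, B)$ of $G$ of size $k$, distribute the $n$ new isolated vertices between the two sides so as to achieve $|A'| = |B'| = n$; this is always possible because $|V(G')| = 2n$ is even and there are exactly $n$ unassigned vertices available to absorb any imbalance of $|A| - |B|$ (which has absolute value at most $n$). Since isolated vertices contribute no edges, the resulting bisection of $G'$ also has size $k$. Conversely, any bisection of $G'$ restricted to $V(G)$ is a cut of $G$ of exactly the same size, so $G$ admits a cut of size at least $k$ if and only if $G'$ admits a bisection of size at least $k$.

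Putting these together, a hypothetical $2^{(t-\varepsilon)}|V(G')|^{O(1)}$-time algorithm for \textsc{Max Bisection}, given a width-$t$ tree decomposition, would solve \textsc{Max Cut} on $G$ in time $2^{(t-\varepsilon)} n^{O(1)}$ using the pair $(G', T')$, contradicting the cited theorem and hence SETH. The ``hard part'' here is really only bookkeeping: the entire technical content of the lower bound is imported as a black box from the \textsc{Max Cut} result, and the only nontrivial-looking step — padding the graph with isolated vertices so that a cut of arbitrary balance can be turned into a perfectly balanced bisection — is straightforward because adding exactly $n$ free vertices suffices to cancel any imbalance between $|A|$ and $|B|$.
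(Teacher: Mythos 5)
Your proposal is correct and follows essentially the same route as the paper: the paper also reduces from the cited SETH lower bound for \textsc{Max Cut} by adding $n$ isolated vertices, noting that $\tw(G') = \tw(G)$ and that $G$ has a cut of size at least $k$ if and only if $G'$ has a bisection of size at least $k$. Your additional bookkeeping (extending the tree decomposition and balancing via the isolated vertices) just spells out details the paper leaves implicit.
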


\section{A linear-time algorithm on graphs with bounded vertex cover number}\label{sec:vc}
In the previous section, we show that {\sc Min Bisection} does not admit an $O(n^{2 - \varepsilon})$-time algorithm even for path graphs unless {\sc $(\min, +)$-Convolution} can be solved in time $O(n^{2 - \delta})$ for some $\delta > 0$.
To improve the quadratic dependency on $n$, we consider vertex cover number as a graph parameter. 
Recall that {\sc Min Bisection} and {\sc Max Bisection} are equivalent to each other when the edge weight is allowed to be arbitrary.
Thus, in what follows, we consider {\sc Max Bisection}.

We first compute a vertex cover $C$ of $G$ with size at most $k$ in time $O(2^kkn)$ by a standard branching algorithm.
For each $X \subseteq C$, we try to find a maximum size bisection $(A, B)$ of $G$ such that $X \subseteq A$ and $B \cap X = \emptyset$.
Fix $X \subseteq C$.
For $v \in V \setminus C$, we compute a value $p_v = w(C \setminus X, \{v\}) - w(X, \{v\})$.
This can be done in $O(k)$ time for each $v \in V \setminus C$.
Since $V \setminus C$ is an independent set of $G$, we can compute the size of a cut $(A, B)$ as:
\begin{eqnarray*}
    w(A, B) &=& w(X, C \setminus X) + \sum_{v \in A \setminus C} w(C \setminus X, \{v\}) + \sum_{v \in B \setminus C} w(X, \{v\})\\
    &=& w(X, C \setminus X) + \sum_{v \in A \setminus C} p_v + \sum_{v \in V \setminus C} w(X, \{v\}).
\end{eqnarray*}
Since the first and third terms depend only on the choice of $X$, we can compute the maximum size of a bisection by summing up largest $n/2 - |X|$ values of $p_v$ for $v \in V \setminus C$.
To this end, we first compute the $(n/2 - |X|)$-th largest value $t$ among all values $p_v$ for $v \in V \setminus C$.
This can be done in linear time by using a linear-time algorithm for order statistics~\cite{Blum::1973}.
From this threshold $t$, we can easily compute the sum of largest $n/2 - |X|$ values of $p_v$ in linear time.
Therefore, by guessing all $X \subseteq C$, we can solve {\sc Min Bisection} and {\sc Max Bisection} in time $O(2^{k}kn)$.

\begin{theorem}
    {\sc Min Bisection} and {\sc Max Bisection} can be solved in time $O(2^\tau(G)\tau(G)n)$, where $\tau(G)$ is the minimum size of a vertex cover of an input graph $G$. 
\end{theorem}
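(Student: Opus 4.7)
The plan is to exploit the structure that a small vertex cover imposes on the cut. First I would compute a vertex cover $C$ of $G$ with $|C| = \tau(G) \le k$ by the standard branching algorithm in time $O(2^k k n)$; this already contributes the ``$2^k$'' factor in the target running time. The key observation is that $V \setminus C$ is an independent set, so every edge of $G$ has at least one endpoint in $C$. Consequently, once we commit to a partition of $C$ between the two sides of the bisection, each remaining vertex $v \in V \setminus C$ contributes to $w(A,B)$ only through its edges to $C$, and the marginal value of placing $v$ on a particular side is a single number depending on $v$ and on the chosen partition of $C$.

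For each of the $2^k$ subsets $X \subseteq C$ (representing the intersection $A \cap C$), I would compute, for every $v \in V \setminus C$, the gain $p_v = w(C \setminus X, \{v\}) - w(X, \{v\})$, i.e., the increase in $w(A,B)$ if $v$ is moved from the $B$-side to the $A$-side. A short calculation then decomposes the cut value as the constant $w(X, C \setminus X) + \sum_{v \in V \setminus C} w(X, \{v\})$ plus the variable term $\sum_{v \in A \setminus C} p_v$. Maximizing the cut subject to the balance condition therefore reduces to picking the $\lceil n/2 \rceil - |X|$ vertices of $V \setminus C$ with the largest $p_v$.

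To avoid an extra logarithmic factor, I would use a linear-time selection algorithm to determine the $(\lceil n/2 \rceil - |X|)$-th largest $p_v$ and then compute the sum of the top values in a single linear pass; ties may be broken arbitrarily since equal $p_v$'s yield equal contributions. Computing all $p_v$ for a fixed $X$ costs $O(kn)$, because each $v \in V \setminus C$ has all its neighbors inside $C$ and $|C| \le k$, so $p_v$ is a sum of at most $k$ weighted contributions; the selection and summation add only $O(n)$. Summing over all $2^k$ choices of $X$ yields $O(2^k k n)$ for the enumeration phase, which matches the cost of the initial vertex-cover branching.

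The main subtlety I anticipate is in handling the balance condition cleanly: one must skip any $X$ with $|X| > \lceil n/2 \rceil$ or $|C \setminus X| > \lceil n/2 \rceil$ since no bisection is then achievable, and when $n$ is odd one must take care to try both $\lceil n/2 \rceil$ and $\lfloor n/2 \rfloor$ as the target size of $A$. Both issues are routine and do not affect the asymptotics, so the overall running time is $O(2^{\tau(G)} \tau(G) n)$, as claimed.
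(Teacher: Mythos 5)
Your proposal is correct and follows essentially the same approach as the paper: the same branching-based vertex cover computation, the same per-vertex gains $p_v = w(C \setminus X, \{v\}) - w(X, \{v\})$ with the same decomposition of the cut value, and the same use of linear-time selection to sum the largest $p_v$ values for each of the $2^{\tau(G)}$ choices of $X \subseteq C$. The extra care you take with infeasible $X$ and odd $n$ is a routine detail the paper leaves implicit.
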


\section{Hardness on graph classes}\label{sec:hardness}
In this section, we discuss some complexity results for {\sc Min Bisection} and {\sc Max Bisection} on unweighted graphs.
In Section~\ref{ssec:obtimal}, we have seen that there is a quite simple reduction from {\sc MaxCut} to {\sc Max Bisection}.
We formally describe some immediate consequences of this reduction as follows.
Let $\mathcal C$ be a graph class such that 
\begin{itemize}
    \item {\sc MaxCut} is NP-hard even if the input graph is restricted to be in $\mathcal C$ and
    \item for every $G \in \mathcal C$, a graph $G'$ obtained from $G$ by adding arbitrary number of isolated vertices is also contained in $\mathcal C$. 
\end{itemize}
The reduction shows that {\sc Max Bisection} is NP-hard for every graph class $\mathcal C$ that satisfies the above conditions.
\begin{theorem}\label{thm:nph:max}
    {\sc Max Bisection} is NP-hard even for split graphs, comparability graphs, AT-free graphs, and claw-free graphs.
\end{theorem}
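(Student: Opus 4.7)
The plan is to apply the generic criterion stated just before the theorem: any graph class $\mathcal{C}$ that is closed under adding isolated vertices and on which \textsc{MaxCut} is NP-hard automatically inherits NP-hardness of \textsc{Max Bisection} via the reduction of appending $n$ isolated vertices to an $n$-vertex \textsc{MaxCut} instance. So the proof amounts to verifying these two conditions for each of split, comparability, AT-free, and claw-free graphs.

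Closure under isolated vertex addition is easy to check case by case. A split graph has vertex set partitioned into a clique and an independent set, and a new isolated vertex can simply be placed in the independent part. A comparability graph admits a transitive orientation of its edges, and a vertex with no incident edges vacuously preserves such an orientation. An asteroidal triple requires each of its three vertices to lie on a path connecting the other two while avoiding the closed neighborhood of the third, which forces positive degree; hence a new isolated vertex cannot belong to any asteroidal triple and so AT-freeness is preserved. Finally, an induced $K_{1,3}$ contains no degree-zero vertex, so claw-freeness is preserved as well.

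The only remaining ingredient is NP-hardness of \textsc{MaxCut} on each of these four classes, which I would pick up from existing work in the literature (for instance, reductions of Bodlaender and Jansen yield NP-hardness on split and comparability graphs, and analogous hardness results are known for claw-free and AT-free graphs, e.g.\ via reductions from cubic or bounded-degree \textsc{MaxCut}). The only real obstacle is bibliographic: pinpointing a clean reference for \textsc{MaxCut} on each of the four classes, AT-free and claw-free being the less standard of the four. Once those citations are in place, the theorem follows immediately from the reduction recalled just above the statement.
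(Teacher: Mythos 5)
Your general framework (closure under adding isolated vertices plus NP-hardness of \textsc{MaxCut} on the class) is exactly the paper's, and your closure checks for split, comparability, AT-free, and claw-free graphs are fine. For split and comparability graphs the argument is then complete modulo standard citations (the paper uses Bodlaender--Jansen for split graphs and Pocai for comparability graphs), so those two cases match the paper's proof.

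The gap is in the AT-free and claw-free cases, where you defer the essential ingredient---NP-hardness of \textsc{MaxCut} on these classes---to an unspecified literature appeal, and the route you gesture at (``reductions from cubic or bounded-degree \textsc{MaxCut}'') would not work: cubic graphs are in general neither claw-free nor AT-free, and hardness on claw-free graphs is genuinely delicate, since line graphs are claw-free and \textsc{MaxCut} is polynomial-time solvable on them (this is exactly the Guruswami result used later in the paper). The paper's actual argument is a small but crucial observation: \textsc{MaxCut} is NP-hard on co-bipartite graphs (Bodlaender--Jansen), and co-bipartite graphs have independence number at most two, hence contain no induced $K_{1,3}$ and no asteroidal triple, so they form a subclass of both claw-free and AT-free graphs; since those two \emph{larger} classes are closed under adding isolated vertices (co-bipartite graphs themselves are not, which is why the paper treats co-bipartite graphs separately later), the isolated-vertex reduction applies to them. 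Without identifying this subclass argument (or some concrete substitute), your proof of the AT-free and claw-free cases is not established---the obstacle is not merely bibliographic.
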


It is known that {\sc MaxCut} is NP-hard even for split graphs~\cite{Bodlaender::2000} and comparability graphs~\cite{Pocai::2016}, and co-bipartite graphs~\cite{Bodlaender::2000} which is a subclass of AT-free graphs and claw-free graphs.
If $\mathcal C$ is the class of co-bipartite graphs, the second condition does not hold in general. 
However, we can prove the hardness of {\sc Max Bisection} on co-bipartite graphs, which will be discussed in the last part of this section.

Suppose the input graph $G$ has $2n$ vertices.
Let $\overline{G}$ is the complement of $G$.
It is easy to see that $G$ has a bisection of size at least $k$ if and only if $\overline{G}$ has a bisection of size at most $n^2 - k$.
This immediately gives the following theorem from Theorem~\ref{thm:nph:max}.
\begin{theorem}\label{thm:nph:min}
    {\sc Min Bisection} is NP-hard even for split graphs and co-comparability graphs.
\end{theorem}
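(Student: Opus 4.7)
The plan is to formalize the complement reduction sketched in the paragraph preceding the theorem and combine it with Theorem~\ref{thm:nph:max}. First I would observe that for any graph $G$ and any bipartition $(A, B)$ of $V(G)$, each pair $(a, b) \in A \times B$ forms an edge in exactly one of $G$ and $\overline{G}$, so $|E_G(A, B)| + |E_{\overline{G}}(A, B)| = |A| \cdot |B|$. When $(A, B)$ is a bisection the product $|A| \cdot |B|$ depends only on $|V(G)|$ and is a fixed constant $M$; for a graph on $2n$ vertices we have $M = n^2$. Consequently, $(A, B)$ is a maximum bisection of $G$ if and only if it is a minimum bisection of $\overline{G}$, and $G$ admits a bisection of size at least $k$ iff $\overline{G}$ admits a bisection of size at most $M - k$. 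Since $G \mapsto \overline{G}$ is computable in polynomial time, this yields a polynomial-time many-one reduction from {\sc Max Bisection} on any graph class $\mathcal C$ to {\sc Min Bisection} on the class $\overline{\mathcal C} := \{\overline{G} : G \in \mathcal C\}$.

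Next I would specialize $\mathcal C$ to the two classes from Theorem~\ref{thm:nph:max} whose complement classes are well understood. For split graphs I would invoke the standard characterization that a graph is split iff its vertex set partitions into a clique and an independent set; complementation swaps the clique and independent set, so the class of split graphs is closed under complementation. Combined with Theorem~\ref{thm:nph:max}, which gives NP-hardness of {\sc Max Bisection} on split graphs, this yields NP-hardness of {\sc Min Bisection} on split graphs. For comparability graphs the class $\overline{\mathcal C}$ is by definition the class of co-comparability graphs, so the same reduction applied to Theorem~\ref{thm:nph:max}'s hardness for comparability graphs immediately gives NP-hardness of {\sc Min Bisection} on co-comparability graphs.

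I do not anticipate a real obstacle: the equation $|E_G(A,B)| + |E_{\overline{G}}(A,B)| = |A|\cdot|B|$ is immediate from counting pairs, and the class-membership facts needed (split graphs are self-complementary, and co-comparability is the complement class of comparability by definition) are standard. The only minor bookkeeping is that the reduction is the identity on vertex sets, so the bisection constraint $-1 \le |A| - |B| \le 1$ is automatically preserved, and the input size blows up by at most a polynomial factor when the complement is written out explicitly.
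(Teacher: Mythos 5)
Your proposal is correct and follows essentially the same route as the paper: complement the graph, use the identity that for a bisection of a fixed-size vertex set the cut sizes in $G$ and $\overline{G}$ sum to a constant, and invoke Theorem~\ref{thm:nph:max} together with the facts that split graphs are closed under complementation and co-comparability graphs are the complements of comparability graphs. The only difference is that you spell out the counting identity explicitly (and for arbitrary parity of $|V|$), whereas the paper states it for $2n$-vertex graphs and leaves it as an observation.
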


For bipartite graphs, {\sc Max Cut} is solvable in polynomial time.
However, we show that  {\sc Min Bisection} and {\sc Max Bisection} are NP-hard even on bipartite graphs.
\begin{theorem}\label{thm:minbisection:bipartite}
    {\sc Min Bisection} is NP-hard even on bipartite graphs.
\end{theorem}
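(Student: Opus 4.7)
The plan is to reduce from \textsc{Min Bisection} on general graphs, which is NP-hard by~\cite{Garey::1976}. Given $G=(V,E)$ with $|V|=n$ (even) and $|E|=m$, I construct a bipartite graph $G'$ in three steps: (i) subdivide every edge $e \in E$ by inserting a new vertex $x_e$, creating paths $u$--$x_e$--$v$; (ii) attach $M := 2m$ pendant vertices (each of degree one) to every $v \in V$; (iii) add $p := m$ isolated vertices (plus one more for parity if needed). The bipartition places $V$ on one side and the subdivision vertices, pendants, and isolated vertices on the other, so $G'$ is simple, unweighted, bipartite, and of size $O(nm)$.

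For the forward direction, I would show that a bisection $(A,B)$ of $G$ of cost $c$ lifts to a bisection of $G'$ of cost exactly $c$: place each subdivision vertex $x_e$ on the side containing both endpoints of $e$ when $e$ is non-cut (zero new cut edges) and on either side when $e$ is cut (exactly one new cut edge); place every pendant on the same side as its anchor (zero new cut edges); and use the subdivision vertices of cut edges together with the isolated vertices to absorb the balance slack. A short counting check shows the balance equation is always satisfiable with the chosen $p$.

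For the reverse direction, consider any bisection $(A',B')$ of $G'$ of cost $k$ and its projection $(S,T) := (A'\cap V,\, B'\cap V)$. Suppose $|S|-|T|=2d$ with $d \ge 1$. The key counting step: since each $v \in V$ pulls its $M$ pendants to its own side, forcing $|A'|=|B'|$ requires at least $(M+1)d-(m+p)/2 = (2m+1)d-m \ge m+1$ pendants to lie on the opposite side of their anchor, and each such pendant contributes one cut edge. Because the forward direction already exhibits a bisection of cost at most $\mathrm{opt}(G) \le m$, any optimal $(A',B')$ must have $d=0$, i.e., $|S|=|T|=n/2$, and hence $(S,T)$ is a genuine bisection of $G$. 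Once $d=0$, each edge $uv \in E$ separated by $(S,T)$ contributes exactly one cut edge in $G'$ (through $x_{uv}$), while each non-separated edge contributes either $0$ or $2$; so the cost of $(S,T)$ in $G$ is at most $k$. Combined with the forward direction, this gives $\mathrm{opt}(G')=\mathrm{opt}(G)$ and completes the reduction.

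The main obstacle is the calibration of the gadget: the pendant multiplicity $M$, the number of isolated vertices $p$, and the structural attachment must be chosen so that (a) any imbalance on $V$ incurs cost strictly exceeding the trivial upper bound $m$ on the optimum, and (b) the forward lift still achieves cost exactly $c$ with no hidden offset. Because we are restricted to simple unweighted bipartite graphs, the imbalance penalty cannot be encoded by edge weights or multi-edges and must be produced by graph structure alone; the choices $M=2m$ and $p=m$ are what make both requirements fit. The delicate part is the simultaneous accounting over $V$, subdivision vertices, pendants, and isolated vertices, verifying feasibility of the tight case while ruling out any cheaper imbalanced bisection.
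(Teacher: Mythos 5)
Your reduction is correct and follows essentially the same route as the paper's own proof: subdivide every edge so the graph becomes bipartite, attach a large number of pendant vertices to each original vertex so that any bisection of the new graph whose restriction to $V$ is unbalanced must cut more pendant edges than the trivial bound $m$ on the optimum, and then read the cut size off the subdivision vertices. The only differences are cosmetic: the paper starts from {\sc Min Bisection} on $4$-regular graphs with $n^3$ pendants per vertex and uses regularity (rather than your extra isolated vertices) to absorb the balance slack in the forward direction.
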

\begin{proof}
We prove the statement by performing a polynomial-time reduction from {\sc Min Bisection} on $4$-regular graphs, which is known to be NP-hard~\cite{Bui::1987}.

Let $G = (V, E)$ be a $4$-regular graph.
We can assume that $G$ has $2n$ vertices since the reduction given by \cite{Bui::1987} works on graphs having even number of vertices.
For each edge $e = \{u, w\} \in E$, we split $e$ by introducing a new vertex $v_e$ and replacing $e$ with two edges $\{u, v_e\}$ and $\{v_e, w\}$.
Then, for each $v \in V$, we add $n^3$ pendant vertices and make adjacent them to $v$.
We denote by $V_E$ the set of vertices newly added for edges, by $V^p$ the set of pendant vertices, and by $G' = (V \cup V_E \cup V^p, E')$ the graph obtained from $G$ as above (see Figure~\ref{fig:reduction}).
As $G$ is $4$-regular, we have $|V_E| = |E| = 4n$ and $|V \cup V_E \cup V^p| = 2n + 4n + 2n^4 = 2n^4+6n$.
Moreover, $G'$ is bipartite.
In the following, we show that $G$ has a bisection of size at most $k$ if and only if so does $G'$.

\begin{figure}[tbp]
    \centering
    \includegraphics[width=8cm]{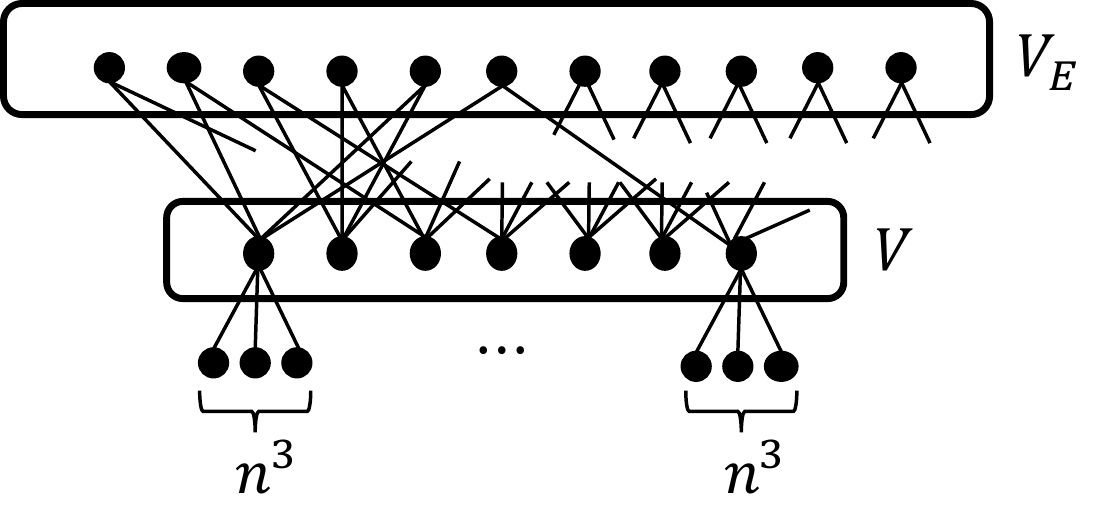}
    \caption{The constructed graph in the proof of Theorem \ref{thm:minbisection:bipartite}.}
    \label{fig:reduction}
\end{figure}

Suppose $G$ has a bisection $(V_1, V_2)$ of size at most $k$.
Since $|V| = 2n$, it holds that $|V_1| = |V_2|$.
For $i = 1, 2$, we set $V'_i = V_i \cup \{v_e : e \subseteq V_i\} \cup V^p_i$, where $V^p_i$ is the set of pendant vertices such that its unique neighbor is contained in $V_i$.
Note that there are no edges between $V'_1$ and $V'_2$ in $G'$ and $|V'_1| = |V'_2|$ so far.
Observe that for every $e \in E(V_1, V_2)$, exactly one of the incidental edges $\{u, v_e\}$ and $\{v_e, w\}$ of the corresponding vertex $v_e$ contributes to its size no matter whether $v_e$ is included in either $V'_1$ or $V'_2$.
Therefore, we can appropriately distribute the remaining vertices $\{v_e : e \in E(V_1, V_2)\}$ to obtain a bisection of size at most $k$.
 
Suppose that $G'$ has a bisection $(V'_1, V'_2)$ of size at most $k$.
Let $V_1 = V'_1 \cap V$ and $V_2 = V'_2$.
We claim that $|V_1| = |V_2|$.
Suppose for contradiction that $|V_1| > |V_2|$.
As $|V_1 \cup V_2| = 2n$, we have $|V_1| \ge n + 1$.
Since $(V'_1, V'_2)$ is a bisection of $G'$,
it holds that $|V'_1| = n^4 + 3n$.
Thus, there are at least $n^3 - 2n + 1$ pendant vertices in $V'_2$ whose neighbors are contained in $V'_1$.
For $n \ge 5$, it holds that $n^3-2n+1 > 4n^2 > k$, contradicting to the assumption that $(V'_1, V'_2)$ is a bisection of size at most $k$.
Moreover, for every edge $e = \{u, w\} \in E(V_1, V_2)$ in $G$, at least one of $\{u, v_e\}$ or $\{v_e, w\}$ contributes to the cut $(V'_1, V'_2)$ in $G'$.
Therefore, we conclude that the size of the cut $(V_1, V_2)$ in $G$ is at most $k$.
\end{proof}

Interestingly, the same construction works well for proving the hardness of {\sc Max Bisection} on bipartite graphs.

\begin{theorem}\label{thm:maxbisection:bipartite}
    {\sc Max Bisection} is NP-hard even on bipartite graphs.
\end{theorem}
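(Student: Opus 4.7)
The plan is to reuse the very same construction from Theorem~\ref{thm:minbisection:bipartite} and relate a bisection of $G'$ to a bisection of $G$ in a ``parity-reversed'' way. Concretely, for a $4$-regular graph $G$ on $2n$ vertices, build $G'$ by subdividing every edge $e$ (introducing $v_e$) and attaching $n^3$ pendants to every vertex of $V$. I will show that, for sufficiently large $n$, $G$ has a bisection of size at most $k$ if and only if $G'$ has a bisection of size at least $2n^4+8n-k$. Since \textsc{Min Bisection} on $4$-regular graphs is NP-hard~\cite{Bui::1987}, this gives NP-hardness of \textsc{Max Bisection} on $G'$, which is bipartite by construction.

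For the forward direction, start from a bisection $(V_1,V_2)$ of $G$ of size $c\le k$. Put every pendant on the opposite side from its parent; this already contributes $2\cdot n\cdot n^3=2n^4$ to the cut. For each subdivision vertex $v_e$: if $e=\{u,w\}$ is a cut edge of $G$, then no matter where $v_e$ is placed it contributes exactly $1$; if $e$ is internal to $V_i$, placing $v_e$ into $V'_{3-i}$ contributes $2$. Doing the latter for all $4n-c$ internal edges gives a $V_E$-contribution of $c+2(4n-c)=8n-c$. A short counting argument (identical to the one in Theorem~\ref{thm:minbisection:bipartite}, using $e_1,e_2\le 2n$ by $4$-regularity) shows that one can redistribute exactly $2n$ of the $v_e$'s to $V'_1$ so that $|V'_1|=|V'_2|=n^4+3n$, while keeping the above-described optimal placement for all internal edges. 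The resulting bisection has size exactly $2n^4+8n-c\ge 2n^4+8n-k$.

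For the converse, suppose $(V'_1,V'_2)$ is a bisection of $G'$ of size at least $2n^4+8n-k$, and let $V_1=V'_1\cap V$, $V_2=V'_2\cap V$. The first step is to prove $|V_1|=|V_2|=n$; this is the main obstacle. If $|V_1|=n+d$ with $d\ge 1$, then since $V_i\subseteq V'_i$, counting pendants in $V'_2$ yields $p_{12}+p_{22}\le n^4+2n+d$, so the pendant-contribution to the cut is at most
\[
p_{12}+p_{21} \;=\; (p_{12}+p_{22}) + (n-d)n^3 - 2p_{22} \;\le\; 2n^4-dn^3+2n+d.
\]
Adding the trivial bound $8n$ on the $V_E$-contribution, the entire cut is at most $2n^4-dn^3+O(n)$, which for $n$ large is strictly less than $2n^4+8n-k$ (note $k\le 4n$). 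Hence $d=0$.

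Once $|V_1|=|V_2|=n$, bound the cut of $G'$ directly: the pendant-contribution is at most $2n^4$ (attained only when every pendant is opposite its parent), and the $V_E$-contribution is $c+2y$ where $c=|E_G(V_1,V_2)|$ and $y\le 4n-c$ is the number of internal edges with $v_e$ placed oppositely. Hence the cut is at most $2n^4+c+2(4n-c)=2n^4+8n-c$. Combining with the assumed lower bound $2n^4+8n-k$ gives $c\le k$, so $(V_1,V_2)$ is a bisection of $G$ of size at most $k$, completing the proof.
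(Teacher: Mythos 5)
Your proposal is correct and follows essentially the same route as the paper: the identical subdivision-plus-pendants construction, the same equivalence ($G$ has a bisection of size at most $k$ iff $G'$ has one of size at least $2n^4+8n-k$), and the same two-step converse (first force $|V_1|=|V_2|$ by a pendant-counting argument losing roughly $n^3$, then bound the $E'(V,V_E)$-contribution by $8n-c$ to get $c\le k$). Your accounting via $p_{ij}$ and $c+2y$ is just a slightly more explicit rendering of the paper's counting.
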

\begin{proof}
    The proof of this theorem is similar to that of Theorem~\ref{thm:minbisection:bipartite}.
    Let $G = (V, E)$ be a 4-regular graph and let $G' = (V \cup V_E \cup V^p, E')$ be the bipartite graph described in the proof of Theorem~\ref{thm:minbisection:bipartite}. 
    In the following, we prove that $G$ has a bisection of size at most $k$ if and only if $G'$ has a bisection of size {\em at least} $2n^4 + 8n - k$.
    
    Suppose first that $G$ has a bisection $(V_1, V_2)$ of size $k$.
    Then, we set $V'_i$ for $i \in \{1, 2\}$ as:
    \begin{itemize}
        \item $V_i \subseteq V'_i$;
        \item If $v \in V_i$, all pendant vertices $w$ with $N(w) = \{v\}$ are contained in $V'_{3 - i}$;
        \item For each $e \in E$ with $e \subseteq V_i$, $v_e$ is contained in $V'_{3 - i}$.
    \end{itemize}
    For each remaining $e \in E(V_1, V_2)$, we add $v_e$ to arbitrary side $V'_i$ so that $(V'_1, V'_2)$ becomes a bisection of $G'$.
    This can be done since $G$ is $4$-regular, which means $G[V_i]$ contains exactly $2n-k$ edges for each $i = \{1, 2\}$.
    Let us note that for $v_e \in V_E$ with $e \in E(V_1, V_2)$, exactly one of the incident edges of $v_e$ contributes to the size of the bisection no matter which $V'_i$ includes $v_e$.
    This implies that the size of bisection $(V'_1, V'_2)$ is $2n^4 + 8n - k$.

    For the converse, suppose that $G'$ has a bisection $(V'_1, V'_2)$ of size at least $2n^4 + 8n - k$.
    For each $i = 1, 2$, we let $V_i = V'_i \cap V$. Then, we claim that $(V_1, V_2)$ is a bisection of $G$.
    To see this, we assume for contradiction that $|V_1| > |V_2|$.
    Clearly, $V_1$ contains at least $n + 1$ vertices.
    As $|V'_1| = |V'_2|$ and $G'$ has $2n + 2n\cdot n^3 + 4n = 2n^4 + 6n$ vertices, we have $|V'_2| = n^4 + 3n$. 
    Since $V_1$ has at least $n + 1$ vertices, at least $(n + 1)n^3 - |V'_2| = n^3 - 3n$ pendant vertices adjacent to some vertex in $V_1$ are included in $V_1$.
    Therefore, at most $|E'| - (n^3 - 3n) = 2|E| + 2n \cdot n^3 - (n^3 - 3n) = 2n^4 - n^3 + 12n$ edges can belong to $E'(V'_1, V'_2)$.
    For $n \ge 3$, we have $2n^4 - n^3 + 12n < 2n^4 + 8n -4n < 2n^4 + 8n -k$, which contradicts to the fact that the size of $(V'_1, V'_2)$ is at least $2n^4 + 8n - k$. Note that $k\le 4n$.
    
    Now, we show that the bisection $(V_1, V_2)$ of $G$ has size at most $k$.
    Since there are $2n^4$ pendant edges in $G'$, at least $8n-k$ edges of $G'[V \cup V_E]$ belong to $E'(V'_1, V'_2)$.
    Note that as $V$ and $V_E$ are respectively independent sets in $G'$, such edges are in $E'(V, V_E)$.
    Moreover, there are $8n$ edges in $E'(V, V_E)$.
    If there are at least $k+1$ vertices $v_e$ in $V_E$ having neighbors both in $V_1$ and in $V_2$,
    the size of $E'(V'_1 \cap (V \cup V_E), V'_2 \cap (V \cup V_E))$ is at most $8n-k-1$ since exactly one of the incidental edges of $v_e$ does not contribute to the cut.
    Thus, the number of such vertices is at most $k$. 
    Since each $v_e\in V_E$ having neighbors both in $V_1$ and in $V_2$ corresponds to a cut edge of $(V_1, V_2)$ in $G$, the size of the bisection $(V_1, V_2)$ of $G$ is at most $k$.
\end{proof}

Since both {\sc Min Bisection} and {\sc Max Bisection} are NP-hard on bipartite graphs, by the same argument with Theorem~\ref{thm:nph:min}, we have the following corollary.
\begin{corollary}
    {\sc Min Bisection} and {\sc Max Bisection} are NP-hard even for co-bipartite graphs.
\end{corollary}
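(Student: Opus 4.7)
The plan is to derive the corollary by the same complementation trick that yielded Theorem~\ref{thm:nph:min} from Theorem~\ref{thm:nph:max}, now applied to the bipartite hardness results Theorems~\ref{thm:minbisection:bipartite} and~\ref{thm:maxbisection:bipartite}. Observe that the complement of a bipartite graph is co-bipartite, so it suffices to transport bisection hardness across complementation.

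First, I would record the following elementary identity: if $G$ is a graph on $2m$ vertices and $(A,B)$ is any bisection of $G$, then $|A|=|B|=m$, so the number of pairs of vertices with one endpoint in $A$ and one in $B$ is exactly $m^2$. Splitting these pairs into edges of $G$ and edges of $\overline{G}$ gives
\begin{equation*}
    |E_G(A,B)| + |E_{\overline{G}}(A,B)| = m^2.
\end{equation*}
Consequently, $(A,B)$ is a bisection of $G$ of size at most $k$ if and only if $(A,B)$ is a bisection of $\overline{G}$ of size at least $m^2-k$, and symmetrically for the max/min swap.

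Next I would combine this with the two bipartite hardness theorems. The reduction used in the proof of Theorem~\ref{thm:minbisection:bipartite} (and reused in Theorem~\ref{thm:maxbisection:bipartite}) produces a bipartite graph $G'$ on $2n^4+6n$ vertices, which is even, so the identity above applies with $m=n^4+3n$. Taking complements turns this instance into a co-bipartite graph $\overline{G'}$, and the identity translates a \textsc{Min Bisection} instance on $G'$ into a \textsc{Max Bisection} instance on $\overline{G'}$ with threshold $m^2-k$, and a \textsc{Max Bisection} instance on $G'$ into a \textsc{Min Bisection} instance on $\overline{G'}$ with threshold $m^2-k$. Since both reductions are polynomial (only the threshold needs updating and $m^2$ is computable in polynomial time), NP-hardness of both \textsc{Min Bisection} and \textsc{Max Bisection} on bipartite graphs transfers to NP-hardness of both problems on co-bipartite graphs.

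There is no real obstacle here; the only point that requires a small sanity check is that the instances fed into the complementation have an even number of vertices, so that the counting identity is clean. This is immediate from the construction $2n^4+6n$ in the proof of Theorem~\ref{thm:minbisection:bipartite}, which I would just note in passing. The corollary then follows.
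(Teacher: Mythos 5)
Your proposal is correct and is essentially the paper's own argument: the paper derives the corollary by the same complementation trick used for Theorem~\ref{thm:nph:min}, noting that complementing the bipartite hard instances yields co-bipartite graphs and exchanges a size-$k$ bisection for a size-$(m^2-k)$ bisection. Your explicit check of the identity $|E_G(A,B)|+|E_{\overline{G}}(A,B)|=m^2$ and of the evenness of the vertex count just spells out what the paper leaves implicit.
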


\section{Line graphs}
Guruswami \cite{Guruswami::1999} showed that {\sc MaxCut} can be solved in linear time for unweighted line graphs.
The idea of the algorithm is to find a cut satisfying a certain condition using an Eulerian tour of the underlying graph of the input line graph.
In this section, we show that his approach works well for {\sc Max Bisection}.

Let $G = (V, E)$ be a graph. The {\em line graph} of $G$, denoted by $L(G) = (V_L, E_L)$, is an undirected graph with $V_L = E$ such that two vertices $e, f \in V_L$ are adjacent if and only if $e$ and $f$ share a common end vertex in $G$.
We call $G$ an underlying graph of $L(G)$.
Note that from a line graph, its underlying graph is not uniquely determined.
However, it is sufficient to take an arbitrary one of them to discuss our result.
Guruswami gave the following sufficient condition for {\sc MaxCut} and showed that every line graph has a cut satisfying this condition.

\begin{lemma}[\cite{Guruswami::1999}]\label{lem:line:sufficient}
    Let $G = (V, E)$ be a (not necessarily line) graph and let $C_1, C_2, \ldots C_k$ be edge disjoint cliques with $\bigcup_{1\le i \le k}C_i = E$.
    If there is a cut $(A, B)$ of $G$ such that $-1 \le |A \cap C_i| - |B \cap C_i| \le 1$ for every $1 \le i \le k$, then $(A, B)$ is a maximum cut of $G$.
\end{lemma}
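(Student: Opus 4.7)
The plan is to exploit the edge-disjoint clique cover to decompose the size of any cut into per-clique contributions, each of which can be maximized independently.

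First I would observe that since the cliques $C_1, \ldots, C_k$ partition $E$ (being edge-disjoint and edge-covering), for any cut $(A, B)$ of $G$ we may write
\begin{equation*}
|E(A, B)| \;=\; \sum_{i=1}^{k} |E(A \cap C_i, B \cap C_i)|.
\end{equation*}
Since $C_i$ induces a complete graph, every pair of vertices in $C_i$ is joined by an edge, and so the $i$-th summand is exactly the product $|A \cap C_i| \cdot |B \cap C_i|$. Thus the cut size of \emph{every} cut $(A, B)$ equals $\sum_{i=1}^k |A \cap C_i| \cdot |B \cap C_i|$.

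Second, I would apply the elementary inequality that if $a, b \ge 0$ are integers with $a + b = n$, then $ab \le \lfloor n/2 \rfloor \lceil n/2 \rceil$, with equality if and only if $|a - b| \le 1$. Taking $n = |C_i|$, $a = |A \cap C_i|$, $b = |B \cap C_i|$ gives a per-clique upper bound that holds for every cut of $G$. The hypothesis of the lemma is precisely that this bound is attained in each summand simultaneously by the given $(A, B)$, so $(A, B)$ saturates the overall upper bound $\sum_i \lfloor |C_i|/2 \rfloor \lceil |C_i|/2 \rceil$ and must therefore be a maximum cut. I do not expect any real obstacle; the argument is a short decomposition together with a local integer-optimization fact, and the only point that needs care is verifying that each edge is counted exactly once in the decomposition, which is immediate from the edge-disjointness and edge-covering properties of the cliques.
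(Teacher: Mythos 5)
Your proof is correct: the edge-disjoint clique cover lets you write every cut's size as $\sum_i |A \cap C_i|\cdot|B \cap C_i|$, each summand is at most $\lfloor |C_i|/2\rfloor\lceil |C_i|/2\rceil$, and the hypothesis $-1 \le |A \cap C_i| - |B \cap C_i| \le 1$ means the given cut attains this universal upper bound, hence is maximum. The paper itself states this lemma only by citation to Guruswami and gives no proof, and your argument is exactly the standard one behind that result, so there is nothing to reconcile.
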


Since the maximum size of a bisection is at most the maximum size of a cut, we immediately conclude that every bisection satisfying the condition in Lemma~\ref{lem:line:sufficient} is a maximum bisection.
The construction of a bipartition $(A, B)$ of $V$ in \cite{Guruswami::1999} is as follows.

Let $L(G) = (V_L, E_L)$ be a line graph whose underlying graph is $G$.
We make $G$ an even-degree graph by putting a vertex $r$ and make adjacent $r$ to each vertex of odd degree.
Let $G'$ be the even-degree graph obtained as above.
Suppose first that $G'$ is connected.
Fix an Eulerian tour starting from $r$ and alternately assign labels $a$ and $b$ to each edge along with the Eulerian tour.
Let $A$ and $B$ be the set of edges having label $a$ and $b$, respectively.
Observe that the bipartition $(A \cap V_L, B \cap V_L)$ of $V_L$ satisfies the sufficient condition in Lemma~\ref{lem:line:sufficient}.
To see this, consider a vertex $v$ of $G$.
Since the set of edges $C_v$ adjacent to $v$ forms a clique in $L(G)$.
Moreover, it is known that, in line graphs, the edges of cliques $\{C_v : v \in G\}$ partitions the whole edge set $E_L$.
Every vertex $v$ of $G'$ except for $r$ has an equal number of incidental edges with label $a$ and those with label $b$ in $G'$, which implies that every clique $C_v$ satisfies $-1 \le |C_v \cap (A \cap V_L)| - |C_v \cap (B \cap V_L)| \le 1$.

Now, we show that the bipartition $(A \cap V_L, B \cap V_L)$ of $V_L$ is also a bisection of $L(G)$.
Consider the labels of the edges incident to $r$ in $G'$.
Observe that every two consecutive edges in the Eulerian tour except for the first and last edge have different labels.
Moreover, the first edge has label $a$.
If the last edge has label $a$, we have $|A| = |B| + 1$ and hence $|A \cap V_L| + 1 = |B \cap V_L|$.
Otherwise, the last edge has label $b$, we have $|A| = |B|$ and hence $|A \cap V_L| = |B \cap V_L|$.
Therefore, $(A \cap V_L, B \cap V_L)$ is a bisection of $L(G)$.

If $G'$ has two or more connected components, we apply the same argument to each connected component and appropriately construct a bipartition of $V_L$.
It is not hard to see that this bipartition also satisfies the condition in Lemma~\ref{lem:line:sufficient}.

Since, given a line graph, we can compute its underlying graph~\cite{Lehot::1974,Roussopoulos::1973} and an Eulerian tour in linear time,
{\sc Max Bisection} on line graphs can be solved in linear time. 

\begin{theorem}
    {\sc Max Bisection} can be solved in linear time on line graphs.
\end{theorem}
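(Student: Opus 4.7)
The plan is to invoke Guruswami's framework: Lemma~\ref{lem:line:sufficient} asserts that any cut of $L(G)$ which splits each clique $C_v$ in the canonical edge-clique cover evenly (within one) is already a maximum cut, and since the optimum bisection size is bounded above by the optimum cut size, any such cut that is \emph{additionally} a bisection is simultaneously a maximum bisection. So the goal reduces to constructing, in linear time, a balanced cut that is also locally balanced on every $C_v$.

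I would proceed exactly as in the preceding discussion. First, recover an underlying graph $G$ of the given line graph using the linear-time algorithms of Lehot or Roussopoulos. Add a single vertex $r$ adjacent to every odd-degree vertex of $G$, yielding an Eulerian graph $G'$, and compute an Eulerian circuit starting at $r$ in linear time. Assign alternating labels $a, b$ to the edges in circuit order, and let $A, B \subseteq E(G) = V_L$ be the resulting label classes. For every $v \in V(G) \setminus \{r\}$, the even incident degree of $v$ in $G'$ forces its incident edges to appear in consecutive in-out pairs along the circuit, so exactly half carry each label; this is the clique-balance condition for $C_v$ required by Lemma~\ref{lem:line:sufficient}.

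The more delicate point, and the main thing to verify, is that $(A, B)$ is genuinely a bisection of $L(G)$, not merely a maximum cut. Since labels alternate starting with $a$, the sizes of $A$ and $B$ differ by at most one up to the contribution of edges incident to $r$; a short case analysis on the label of the last edge of the Eulerian circuit confirms that after restricting to $V_L$ the imbalance remains at most one. When $G'$ is disconnected, apply the procedure to each component independently; swapping the two labels inside a component preserves local balance, so one can greedily re-label components to meet the global balance requirement. All steps---recognizing the underlying graph, eulerizing, finding the tour, and the greedy combination---run in linear time, which yields the theorem.
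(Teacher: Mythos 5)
Your proposal is correct and follows essentially the same route as the paper: recover the underlying graph in linear time, eulerize by adding a vertex $r$ joined to the odd-degree vertices, alternately label an Eulerian tour from $r$ to get a cut that is balanced on every clique $C_v$ (hence maximum by Lemma~\ref{lem:line:sufficient}), verify via the label of the last tour edge that the restriction to $V_L$ is a bisection, and fix up disconnected components by flipping labels. The only cosmetic point is that for an odd-degree vertex $v$ the edge $\{v,r\}$ is not in $V_L$, so $C_v$ is balanced only to within one rather than exactly, which is precisely what the lemma requires.
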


\section{Conclusion}
In this paper, we show that there is an $O(2^{t}(tn)^2)$-time algorithm for solving {\sc Min Bisection} and {\sc Max Bisection}, provided that a width-$t$ tree decomposition is given as input.
This running time matches the conditional lower bound given by Eiben et al.~\cite{Eiben::2019} based on {\sc $(\min, +)$-Convolution}.
We slightly strengthen their conditional lower bound by exhibiting a reduction from {\sc $(\min, +)$-Convolution} to {\sc Min Bisection} on paths.
The exponential dependency of treewidth in our running time has been shown to be asymptotically optimal under the Strong Exponential Time Hypothesis.

For unweighted graphs, Eiben et al. showed that the quadratic dependency in the running time can be slightly improved: They gave an $O(8^tt^{O(1)}n^{1.864}\log n)$-time algorithm for {\sc Min Bisection} using an extension of the fast $(\min, +)$-convolution technique due to Chan et al.~\cite{Chan::2015}.
It would be interesting to know whether a similar improvement can be achieved in our case.

We also show that {\sc Min Bisection} and {\sc Max Bisection} are NP-hard even for several restricted graph classes.
In particular, both problems are NP-hard even on unweighted bipartite graphs, which is in contrast with the tractability of {\sc MinCut} and {\sc MaxCut} on this graph class.
There are several open problems regarding special graph classes.
One of the most notable open questions would be to reveal the complexity of {\sc Min Bisection} on planar graphs.

\section*{Acknowledgments}
The authors thank anonymous reviewers for giving us valuable comments to the preliminary version of this paper~\cite{Hanaka:Optimal:2020}.
This work is partially supported by JSPS KAKENHI Grant Numbers JP19K21537, JP17H01788 and JST CREST JPMJCR1401.

\bibliographystyle{plain}
\bibliography{main}

\begin{thebibliography}{10}

\bibitem{Backurs:Better:2017}
Arturs Backurs, Piotr Indyk, and Ludwig Schmidt.
\newblock Better approximations for tree sparsity in nearly-linear time.
\newblock In {\em Proceedings of the Twenty-Eighth Annual ACM-SIAM Symposium on
  Discrete Algorithms}, SODA ’17, page 2215–2229, USA, 2017. Society for
  Industrial and Applied Mathematics.

\bibitem{Blum::1973}
Manuel Blum, Robert~W. Floyd, Vaughan Pratt, Ronald~L. Rivest, and Robert~E.
  Tarjan.
\newblock Time bounds for selection.
\newblock {\em Journal of Computer and System Sciences}, 7(4):448 -- 461, 1973.

\bibitem{Bodlaender::1998}
Hans~L. Bodlaender and Torben Hagerup.
\newblock Parallel algorithms with optimal speedup for bounded treewidth.
\newblock {\em SIAM Journal on Computing}, 27(6):1725--1746, 1998.

\bibitem{Bodlaender::2000}
Hans~L. Bodlaender and Klaus Jansen.
\newblock On the complexity of the maximum cut problem.
\newblock {\em Nordic J. of Computing}, 7(1):14–31, 2000.

\bibitem{Bui::1987}
Thang~Nguyen Bui, F.~Thomson Leighton, Soma Chaudhuri, and Michael Sipser.
\newblock Graph bisection algorithms with good average case behavior.
\newblock {\em Combinatorica}, 7(2):171–191, 1987.

\bibitem{Chan::2015}
Timothy~M. Chan and Moshe Lewenstein.
\newblock Clustered integer 3sum via additive combinatorics.
\newblock In {\em Proceedings of the Forty-Seventh Annual ACM Symposium on
  Theory of Computing}, STOC ’15, page 31–40, New York, NY, USA, 2015.
  Association for Computing Machinery.

\bibitem{Diaz::2007}
Josep D\'{\i}az and Marcin Kamiski.
\newblock {MAX-CUT and MAX-BISECTION Are NP-Hard on Unit Disk Graphs}.
\newblock {\em Theor. Comput. Sci.}, 377(1–3):271–276, 2007.

\bibitem{Diaz::2017}
Josep D\'iaz and George~B. Mertzios.
\newblock Minimum bisection is {NP}-hard on unit disk graphs.
\newblock {\em Information and Computation}, 256:83 -- 92, 2017.

\bibitem{Eiben::2019}
Eduard Eiben, Daniel Lokshtanov, and Amer~E. Mouawad.
\newblock {Bisection of Bounded Treewidth Graphs by Convolutions}.
\newblock In Michael~A. Bender, Ola Svensson, and Grzegorz Herman, editors,
  {\em 27th Annual European Symposium on Algorithms (ESA 2019)}, volume 144 of
  {\em Leibniz International Proceedings in Informatics (LIPIcs)}, pages
  42:1--42:11, Dagstuhl, Germany, 2019. Schloss Dagstuhl--Leibniz-Zentrum fuer
  Informatik.

\bibitem{Garey::1976}
M.R. Garey, D.S. Johnson, and L.~Stockmeyer.
\newblock Some simplified {NP}-complete graph problems.
\newblock {\em Theoretical Computer Science}, 1(3):237 -- 267, 1976.

\bibitem{Guruswami::1999}
Venkatesan Guruswami.
\newblock Maximum cut on line and total graphs.
\newblock {\em Discrete Appl. Math.}, 92(2–3):217–221, 1999.

\bibitem{Hadlock::1975}
F.~Hadlock.
\newblock Finding a maximum cut of a planar graph in polynomial time.
\newblock {\em SIAM Journal on Computing}, 4(3):221--225, 1975.

\bibitem{Hanaka:Optimal:2020}
Tesshu Hanaka, Yasuaki Kobayashi, and Taiga Sone.
\newblock An optimal algorithm for {\sc bisection} for bounded-treewidth
  graphs.
\newblock In {\em Proceedings of 14th International Frontiers of Algorithmics
  Workshop}, 2020.
\newblock Accepted.

\bibitem{Impagliazzo::2001}
Russell Impagliazzo and Ramamohan Paturi.
\newblock On the complexity of $k$-sat.
\newblock {\em J. Comput. Syst. Sci.}, 62(2):367–375, 2001.

\bibitem{Jansen::2005}
Klaus Jansen, Marek Karpinski, Andrzej Lingas, and Eike Seidel.
\newblock {Polynomial Time Approximation Schemes for MAX-BISECTION on Planar
  and Geometric Graphs}.
\newblock {\em SIAM Journal on Computing}, 35(1):110--10, 2005.

\bibitem{Karp::1972}
Richard~M Karp.
\newblock Reducibility among combinatorial problems.
\newblock In {\em Complexity of computer computations}, pages 85--103.
  Springer, 1972.

\bibitem{Kloks::1994}
Ton Kloks.
\newblock {\em Treewidth, Computations and Approximations}, volume 842 of {\em
  Lecture Notes in Computer Science}.
\newblock Springer, 1994.

\bibitem{Lehot::1974}
Philippe G.~H. Lehot.
\newblock An optimal algorithm to detect a line graph and output its root
  graph.
\newblock {\em J. ACM}, 21(4):569–575, 1974.

\bibitem{Lokshtanov::2018}
Daniel Lokshtanov, D\'{a}niel Marx, and Saket Saurabh.
\newblock Known algorithms on graphs of bounded treewidth are probably optimal.
\newblock {\em ACM Trans. Algorithms}, 14(2), 2018.

\bibitem{Orlova::1972}
G.~I. Orlova and Y.~G. Dorfman.
\newblock Finding the maximal cut in a graph.
\newblock {\em Engineering Cybernetics}, 10:502--506, 1972.

\bibitem{Pocai::2016}
Rafael~Veiga Pocai.
\newblock {The Complexity of SIMPLE MAX-CUT on Comparability Graphs}.
\newblock {\em Electronic Notes in Discrete Mathematics}, 55:161 -- 164, 2016.

\bibitem{Roussopoulos::1973}
Nicholas~D. Roussopoulos.
\newblock A max $\{m,n\}$ algorithm for determining the graph ${H}$ from its
  line graph ${G}$.
\newblock {\em Information Processing Letters}, 2(4):108 -- 112, 1973.

\bibitem{Shih::1990}
W.-K. Shih, S.~Wu, and Y.~S. Kuo.
\newblock Unifying maximum cut and minimum cut of a planar graph.
\newblock {\em IEEE Trans. Comput.}, 39(5):694–697, 1990.

\end{thebibliography}

\end{document}